\newtheorem{theorem}{Theorem}
\newtheorem{lemma}{Lemma}
\newcommand{\R}{\mathbb{R}}
\newcommand{\E}{\mathbb{E}}
\newcommand{\Var}{{\rm Var}}
\newcommand{\overbar}[1]{\mkern 1.5mu\overline{\mkern-1.5mu#1\mkern-1.5mu}\mkern 1.5mu}
\def\cA{{\mathcal A}}
\def\cC{{\mathcal C}}
\def\cL{{\mathcal L}}
\def\cP{{\mathcal P}}
\def\sX{{\mathsf X}}
\def\sY{{\mathsf Y}}
\def\PP{{\mathbb P}}
\def\d{{\mathrm d}}
\def\Var{\operatorname{Var}}
\def\deq{{\triangleq}}
\def\eps{\varepsilon}
\begin{document}

\title{\LARGE{Concentration of Measure Inequalities and Their Communication and Information-Theoretic Applications}}

\author{Maxim Raginsky \qquad Igal Sason
\thanks{
Maxim Raginsky is with Department of Electrical and Computer Engineering,
Coordinated Science Laboratory, University of Illinois at Urbana-Champaign,
Urbana, IL 61801, USA (e-mail: maxim@illinois.edu).}
\thanks{
I. Sason is with the Department of Electrical Engineering, Technion--Israel
Institute of Technology, Haifa 32000, Israel (e-mail: sason@ee.technion.ac.il).}}

\maketitle

\begin{abstract}
During the last two decades, concentration of measure has been a subject of various
exciting developments in convex geometry, functional
analysis, statistical physics, high-dimensional statistics, probability theory,
information theory, communications and coding theory, computer science, and learning
theory. One common theme which emerges in these fields is probabilistic stability:
complicated, nonlinear functions of a large number of independent or weakly dependent
random variables often tend to concentrate sharply around their expected values.
Information theory plays a key role in the derivation of concentration inequalities.
Indeed, both the entropy method and the approach based on transportation-cost
inequalities are two major information-theoretic paths toward proving concentration.

This brief survey is based on a recent monograph of the authors in the {\em Foundations
and Trends in Communications and Information Theory}, and a tutorial given by the authors
at ISIT~2015. It introduces information theorists to three main techniques for deriving
concentration inequalities: the martingale method, the entropy method, and the
transportation-cost inequalities.
Some applications in information theory, communications, and coding theory are used to
illustrate the main ideas.
\end{abstract}

\section{Introduction}

Concentration inequalities bound from above the
probability that a random variable $Z$ deviates from its mean,
median or some other typical value by a given amount. These
inequalities have been studied for several decades, with some
fundamental and substantial contributions during
the last two decades. Very roughly speaking, the concentration-of-measure
phenomenon can be stated in the following simple way:
``A random variable that depends in a smooth way on many independent
random variables (but not too much on any of them) is essentially
constant'' \cite{Talagrand96}. Informally, this amounts to saying that
such a random variable $Z$ concentrates around its expected value, $\E[Z]$, in such a
way that the probability of the event $\{|Z-\E[Z]| \geq t\}$,
for a given $t > 0$, decays exponentially in some power of $t$. Detailed treatments
of the concentration-of-measure phenomenon, including historical
accounts, can be found, e.g., in \cite{Boucheron_Lugosi_Massart_book,
Ledoux, Lugosi_Lecture_Notes, Massart_book, McDiarmid_tutorial, Talagrand95,
AlonS_tpm3, Raginsky_Sason_FnT_2nd}.

In recent years, concentration inequalities have been intensively studied
and used as a powerful tool in various areas. These include convex geometry,
functional analysis, statistical physics, probability theory,
statistics, information theory, communications and coding theory, learning theory,
and computer science. Several techniques have been developed so
far to prove concentration inequalities. This survey paper focuses on three
such techniques which are studied in our tutorial \cite{Raginsky_Sason_FnT_2nd}
and references therein:
\begin{itemize}
\item The martingale method (see, e.g.,
\cite{McDiarmid_tutorial,Azuma,Hoeffding}, \cite[Chapter~7]{AlonS_tpm3},
\cite{Chung_LU2006,survey2006}),
and its information-theoretic applications
(see, e.g., \cite{RiU_book} and references therein, \cite{SeldinLCTA_IT2012}).
\item The entropy method and logarithmic Sobolev inequalities
(see, e.g., \cite[Chapter~5]{Ledoux}, \cite{Lugosi_Lecture_Notes}
and references therein).
\item Transportation-cost inequalities which originated from
information theory (see, e.g., \cite[Chapter~6]{Ledoux},
\cite{Gozlan_Leonard}, \cite{Marton_ISIT2013} and references therein).
\end{itemize}
Our goal here is to give the reader a quick preview of the vast field of
concentration inequalities and their applications in information theory,
communications and coding. Therefore, we state most of the theorems and
lemmas without proofs; occasionally, we provide sketches or brief outlines.
More details can be found in our monograph \cite{Raginsky_Sason_FnT_2nd} and
the slides of our ISIT'15 tutorial.\footnote{Part 1 (The martingale method): \newline \url{http://webee.technion.ac.il/people/sason/raginsky_sason_ISIT_2015_tutorial_part_1.pdf}.
\par
Part 2 (The entropy method and transportation-cost inequalities):
\newline
\url{http://webee.technion.ac.il/people/sason/raginsky_sason_ISIT_2015_tutorial_part_2.pdf}.}

\section{The basic toolbox}

Our objective is to derive tight upper bounds on the tail probabilities
$$
\PP[Z \ge \E[Z] + t] \text{ and } \PP[Z \le \E[Z] - t], \qquad \forall t > 0
$$
where $Z = f(X_1,\ldots,X_n)$ is an arbitrary function of $n$ independent
random variables $X_1,\ldots,X_n$. To get an idea of what we can expect,
let us first recall Chebyshev's inequality:
\begin{align*}
	\PP[|Z - \E[Z]| \ge t] \le \frac{\Var[Z]}{t^2}, \qquad \forall t > 0.
\end{align*}
This inequality shows that the tail probability decays with $t$, and that the rate of decay is proportional to the variance of $Z$. Thus, the variance of $Z$ gives an idea about how tightly $Z$ concentrates around its mean. In fact, if $Z$ takes values in a bounded interval, then we can upper-bound the variance of $Z$ only in terms of the length of this interval:

\begin{lemma}\label{lm:var_bound} Let $Z$ be a random variable taking values in an interval $[a,b]$. Then
	\begin{align}\label{eq:var_bound}
	\Var[Z] \le \tfrac14 \, (b-a)^2.
\end{align}
This bound is sharp: if $Z$ only takes the two values $a$ and $b$ with equal probability, then $\Var[Z] = \tfrac14 \, (b-a)^2$.
\end{lemma}
\begin{IEEEproof} Recall that $\Var[Z] \le \E[(Z-c)^2]$ for all $c \in \R$. Letting $c = \frac{a+b}{2}$,
we obtain \eqref{eq:var_bound}. The case of equality is an easy calculation.
\end{IEEEproof}

Thus, for a bounded $Z$ in an interval $[a,b]$, Chebyshev's inequality gives $$\PP[|Z-\E[Z]| \ge t] \le \frac{(b-a)^2}{4t^2}.$$
Much stronger concentration inequalities can be derived, however, for bounded random variables. Using Markov's inequality, for every
$\lambda > 0$ we have
\begin{align*}
	\PP\left[Z - \E[Z] \ge t\right] &= \PP\left[e^{\lambda(Z-\E[Z])} \ge e^{\lambda t}\right] \\
     & \le e^{-(\lambda t-\psi(\lambda))},
\end{align*}
where $\psi(\lambda) \deq \log \E[e^{\lambda(Z - \E[Z])}]$ is the \textit{logarithmic moment-generating function} of $Z$. Optimizing over $\lambda$, we get the \textit{Chernoff bound}
\begin{align*}
	\PP\left[Z \ge \E[Z] + t \right] \le e^{-\psi^\star(t)},
\end{align*}
where $\psi^\star(t) \deq \sup_{\lambda \ge 0}\left[\lambda t - \psi(\lambda)\right]$ is the Legendre dual of $\psi$. For example, if $Z \sim N(0,\sigma^2)$ (Gaussian with mean $0$ and variance $\sigma^2$), we have $\psi(\lambda) = \lambda^2\sigma^2/2$, and $\psi^\star(t) = t^2/2\sigma^2$. With this in mind, we say that a random variable $Z$ is \textit{$\sigma^2$-subgaussian} if $\psi(\lambda) \le \lambda^2\sigma^2/2$. For a subgaussian random variable, we obtain $\psi^\star(t) \ge t^2/2\sigma^2$, which gives the tail bound
$$
\PP[Z \ge \E[Z] + t] \le e^{-t^2/2\sigma^2}, \qquad \forall t > 0.
$$
Thus, the whole affair hinges on our ability to prove that the random variable $Z$ of interest is subgaussian.

To start with, a bounded random variable is subgaussian:
\begin{lemma}[Hoeffding \cite{Hoeffding}]\label{lm:Hoeffding} For a random variable $Z$ taking values in an interval $[a,b]$, we have
	\begin{align}\label{eq:Hoeffding}
		\log \E[e^{\lambda (Z-\E[Z])}] \le \tfrac18 \, \lambda^2(b-a)^2.
	\end{align}
\end{lemma}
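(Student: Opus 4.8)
The plan is to work directly with the cumulant-generating function $\psi(\lambda) \deq \log \E[e^{\lambda(Z-\E[Z])}]$ and show that its second derivative is bounded by $\tfrac14(b-a)^2$ for every $\lambda$; integrating this bound twice, together with the normalization $\psi(0) = 0$ and $\psi'(0) = \E[Z-\E[Z]] = 0$, will yield \eqref{eq:Hoeffding}. First I would, without loss of generality, center the variable and assume $\E[Z] = 0$ (replacing $[a,b]$ by $[a - \E[Z], b - \E[Z]]$, which leaves $b-a$ unchanged). Then $\psi(\lambda) = \log \E[e^{\lambda Z}]$, and a direct computation gives $\psi'(\lambda) = \frac{\E[Z e^{\lambda Z}]}{\E[e^{\lambda Z}]}$ and $\psi''(\lambda) = \frac{\E[Z^2 e^{\lambda Z}]}{\E[e^{\lambda Z}]} - \left(\frac{\E[Z e^{\lambda Z}]}{\E[e^{\lambda Z}]}\right)^2$.

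The key observation is that $\psi''(\lambda)$ is exactly the variance of $Z$ under the \emph{tilted} probability measure $\d \QQ_\lambda = \frac{e^{\lambda Z}}{\E[e^{\lambda Z}]}\, \d \PP$. Since $Z$ still takes values in $[a,b]$ under $\QQ_\lambda$ (tilting does not change the support), Lemma~\ref{lm:var_bound} applies and gives $\psi''(\lambda) = \Var_{\QQ_\lambda}[Z] \le \tfrac14 (b-a)^2$ for all $\lambda$. This is the crux of the argument, and the only genuinely substantive step; everything else is bookkeeping. I expect the main (though modest) obstacle to be handling the interchange of differentiation and expectation cleanly — justifying that $\psi$ is twice differentiable with the claimed derivatives — but since $Z$ is bounded, $\E[e^{\lambda Z}]$ is finite and smooth in $\lambda$ by dominated convergence, so this causes no real difficulty.

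Finally I would assemble the pieces: by Taylor's theorem with the integral (or Lagrange) form of the remainder, $\psi(\lambda) = \psi(0) + \lambda\,\psi'(0) + \int_0^\lambda (\lambda - s)\,\psi''(s)\, \d s$. Using $\psi(0) = 0$, $\psi'(0) = 0$, and the bound $\psi''(s) \le \tfrac14(b-a)^2$, this gives $\psi(\lambda) \le \tfrac14(b-a)^2 \int_0^\lambda (\lambda - s)\, \d s = \tfrac18 \lambda^2 (b-a)^2$, which is precisely \eqref{eq:Hoeffding}. The same bound holds for $\lambda < 0$ by the symmetry of the remainder formula (or by replacing $Z$ with $-Z$). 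An alternative to the Taylor-remainder step is to note that $g(\lambda) \deq \tfrac18\lambda^2(b-a)^2 - \psi(\lambda)$ satisfies $g(0) = g'(0) = 0$ and $g''(\lambda) \ge 0$, hence $g$ is convex with a minimum at $0$, so $g \ge 0$ everywhere.
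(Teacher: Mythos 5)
Your proof is correct and is essentially identical to the paper's own argument: both identify $\psi''(\lambda)$ as the variance of $Z$ under the exponentially tilted distribution, invoke Lemma~\ref{lm:var_bound} to bound it by $\tfrac14(b-a)^2$, and integrate twice using $\psi(0)=\psi'(0)=0$. Your additional remarks on centering, differentiation under the expectation, and the convexity-based alternative are fine but not departures from the paper's route.
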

\begin{IEEEproof} We give a simple probabilistic proof, which has the additional benefit of highlighting the role of the tilted distribution. Let $P = \cL(Z)$,\footnote{The notation $\cL(Z)$ stands for the law, or probability distribution, of the random variable $Z$.} and introduce its \textit{exponential tilting} $P^{(t)}$: for an arbitrary sufficiently regular function $f \colon \R \to \R$,
	\begin{align*}
		\E_{P^{(t)}}[f(Z)] \deq \frac{\E_P[f(Z)e^{tZ}]}{\E_P[e^{tZ}]}.
	\end{align*}
Since $Z$ is supported on $[a,b]$ under $P$, the same holds under $P^{(t)}$ as well. Therefore, by Lemma~\ref{lm:var_bound},
$$
\Var_{P^{(t)}}[Z] \le \tfrac14 \, (b-a)^2.
$$
On the other hand,
\begin{align*}
\Var_{P^{(t)}}[Z] &= \frac{\E_P[Z^2e^{tZ}]}{\E_P[e^{tZ}]} - \left(\frac{\E_P[Ze^{tZ}]}{\E_P[e^{tZ}]}\right)^2 \\
&= \psi''(t).
\end{align*}
Therefore, $$\psi''(t) \le \tfrac14 \, (b-a)^2$$ for all $t$.
Integrating and using the fact that $$\psi(0) = \psi'(0) = 0,$$ we get \eqref{eq:Hoeffding}.
\end{IEEEproof}

Both the martingale method and the entropy method are just elaborations of these basic tools, which are applicable to an arbitrary bounded real-valued random variable. However, one should keep in mind that concentration of measure is a \textit{high-dimensional} phenomenon: we are interested in situations when $Z$ is a function of many independent random variables $X_1,\ldots,X_n$, and we can often quantify the ``sensitivity'' of $f$ to changes in each of its arguments while the others are kept fixed. This suggests that we may get a handle on the high-dimensional concentration properties of $Z$ by breaking up the problem into $n$ one-dimensional subproblems involving only one of the $X_i$'s at a time. Whenever such a divide-and-conquer approach is possible, we speak of \textit{tensorization}, by which we mean that some quantity involving the distribution of $$Z = f(X_1,\ldots,X_n)$$ (e.g., variance or relative entropy) can be related to the sum of similar quantities involving the \textit{conditional} distribution of each $X_i$ given $$\overbar{X}^i \, \deq \, (X_1,\ldots,X_{i-1},X_{i+1},\ldots,X_n).$$

\section{The martingale method}

The basic idea behind the martingale method is to start with the \textit{Doob martingale decomposition}
\begin{align}\label{eq:Doob}
Z - \E[Z] = \sum^n_{k=1} \xi_k,
\end{align}
where
\begin{align} \label{eq: xi sequence}
\xi_k \, \deq \, \E[Z|X^k] - \E[Z|X^{k-1}]
\end{align}
with
\begin{align*}
X^k \, \deq \, (X_1, \ldots, X_k)
\end{align*}
and then to exploit any information about the sensitivity of $f$ to local changes in its arguments
in order to control the sizes of the increments $\xi_k$. As a warm-up, consider the following inequality,
first obtained in a restricted setting by Efron and Stein \cite{Efron_Stein} and generalized by Steele \cite{Steele}:

\begin{lemma}[Efron--Stein--Steele]
Let $Z = f(X^n)$ where $X_1, \ldots, X_n$ are independent, then
	\begin{align}\label{eq:ESS}
		\Var[Z] \le \sum^n_{k=1} \E\left[ \Var[Z|\overbar{X}^k]\right].
	\end{align}
\end{lemma}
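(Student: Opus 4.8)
The plan is to run the martingale method directly on the Doob decomposition \eqref{eq:Doob}--\eqref{eq: xi sequence}, turning $\Var[Z]$ into a sum of $n$ one-dimensional contributions (this is precisely the ``tensorization'' alluded to above). First I would note that the increments $\xi_k$ form a martingale-difference sequence with respect to the filtration generated by $X^1, X^2, \ldots$: by the tower property $\E[\xi_k \,|\, X^{k-1}] = 0$, and for $j < k$ the variable $\xi_j$ is $\sigma(X^{k-1})$-measurable, so $\E[\xi_j \xi_k] = \E\bigl[\xi_j\,\E[\xi_k \,|\, X^{k-1}]\bigr] = 0$. Squaring \eqref{eq:Doob} and taking expectations then gives the orthogonal (Pythagorean) decomposition
\begin{align*}
\Var[Z] = \E\biggl[\Bigl(\sum_{k=1}^n \xi_k\Bigr)^{\!2}\biggr] = \sum_{k=1}^n \E[\xi_k^2].
\end{align*}

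The heart of the argument is to show $\E[\xi_k^2] \le \E\bigl[\Var[Z|\overbar{X}^k]\bigr]$ for each $k$, and this is where independence of $X_1,\ldots,X_n$ enters. Writing $W_k \deq \E[Z \,|\, \overbar{X}^k]$, I would first establish the identity
\begin{align*}
\xi_k = \E\bigl[\, Z - W_k \,\big|\, X^k \,\bigr].
\end{align*}
Since $\xi_k = \E[Z \,|\, X^k] - \E[Z \,|\, X^{k-1}]$ by definition, it suffices to check that $\E[W_k \,|\, X^k] = \E[Z \,|\, X^{k-1}]$. But $W_k$ is a function of $\overbar{X}^k = (X_1,\ldots,X_{k-1},X_{k+1},\ldots,X_n)$, which is independent of $X_k$; hence conditioning $W_k$ on $X^k = (X^{k-1},X_k)$ is the same as conditioning it on $X^{k-1}$ alone, and $\E[W_k \,|\, X^{k-1}] = \E[Z \,|\, X^{k-1}]$ by the tower property because $\sigma(X^{k-1}) \subseteq \sigma(\overbar{X}^k)$.

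With this identity in hand the rest is routine: by the conditional Jensen inequality (equivalently, because conditional expectation is an $L^2$-contraction) and then the tower property,
\begin{align*}
\E[\xi_k^2] &= \E\Bigl[\bigl(\E[Z - W_k \,|\, X^k]\bigr)^2\Bigr] \le \E\bigl[(Z - W_k)^2\bigr] \\
&= \E\Bigl[\E\bigl[(Z - W_k)^2 \,\big|\, \overbar{X}^k\bigr]\Bigr] = \E\bigl[\Var[Z|\overbar{X}^k]\bigr],
\end{align*}
the last equality holding because $W_k = \E[Z \,|\, \overbar{X}^k]$ is exactly the conditional mean. Summing over $k$ and combining with the Pythagorean identity yields \eqref{eq:ESS}.

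I expect the one genuinely delicate point to be the middle step: verifying $\E[W_k \,|\, X^k] = \E[Z \,|\, X^{k-1}]$ needs care, since $\sigma(X^k)$ and $\sigma(\overbar{X}^k)$ are not nested, and this is the unique place where independence (rather than mere integrability) is used. Everything else --- orthogonality of martingale differences, conditional Jensen, and the tower property --- is standard and essentially mechanical.
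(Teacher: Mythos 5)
Your proof is correct and follows exactly the route the paper takes: martingale orthogonality gives $\Var[Z]=\sum_k\E[\xi_k^2]$, then independence is used to write $\xi_k=\E[Z-\E[Z|\overbar{X}^k]\,|\,X^k]$, and conditional Jensen plus the tower property finish it. The only difference is that you spell out the verification of $\E[W_k|X^k]=\E[Z|X^{k-1}]$ explicitly (correctly noting that this is the one place where mutual independence, not just integrability, is used), whereas the paper states this step tersely as a consequence of independence.
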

\begin{proof}
We exploit the fact that $\{\xi_k\}^n_{k=1}$ in \eqref{eq: xi sequence}
is a \textit{martingale difference sequence} with respect to $X^n$, i.e.,
\begin{align} \label{eq: martingale difference}
\E[\xi_k|X^{k-1}] = 0
\end{align}
for all $k \in \{1, \ldots, n\}$. Hence, since $\E[\xi_k \xi_l] = 0$ for $k \neq l$,
\begin{align}\label{eq:var_decomp}
	\Var[Z] = \sum^n_{k=1} \E[\xi^2_k].
\end{align}
The independence of $X_1,\ldots,X_n$ in \eqref{eq: xi sequence} yields
\begin{align*}
   \xi_k = \E\bigl[Z-\E[Z | \overbar{X}^k] \, | \, X^k \bigr]
\end{align*}
and, from Jensen's inequality,
\begin{align*}
   \xi_k^2 \le \E\bigl[\bigl(Z-\E[Z | \overbar{X}^k]\bigr)^2 \, | \, X^k \bigr].
\end{align*}
Due to the independence of $X_1, \ldots, X_n$, this in turn yields
\begin{align} \label{tel-aviv}
   \E[\xi_k^2] &\le \E[\bigl(Z-\E[Z | \overbar{X}^k]\bigr)^2 \bigr] \nonumber \\
               &= \E[ \Var[Z|\overbar{X}^k] ].
\end{align}
Substituting \eqref{tel-aviv} into \eqref{eq:var_decomp} yields \eqref{eq:ESS}.
\end{proof}

The Efron--Stein--Steele inequality is our first example of tensorization: it upper-bounds the
variance of $Z = f(X_1,\ldots,X_n)$ by the sum of the expected values of the conditional variances
of $Z$ given all but one of the variables. In other words, we say that $\Var[f(X_1,\ldots,X_n)]$
tensorizes. This fact has immediate useful consequences. For example, we can use any convenient
technique for upper-bounding variances to control each term on the right-hand side of \eqref{eq:var_decomp},
and thus obtain many useful variants of the Efron--Stein--Steele inequality:

\begin{enumerate}
    \item For every random variable $U$ with a finite second moment, $$\Var[U] = \tfrac12 \, \E[(U-U')^2]$$
    where $U'$ is an i.i.d.\ copy of $U$. Thus, if we let $$Z'_k = f(X_1,\ldots,X_{k-1},X'_k,X_{k+1},\ldots,X_n),$$
    where $X'_k$ is an i.i.d.\ copy of $X_k$, then $Z$ and $Z'_k$ are i.i.d. given $\overbar{X}^k$. This implies that
    $$\Var[Z|\overbar{X}^k] = \tfrac12 \, \E\Big[(Z-Z'_k)^2\Big|\overbar{X}^k\Big]$$
	for $k \in \{1,\ldots,n\}$, yielding the following variant of the Efron--Stein--Steele inequality:
	\begin{align}\label{eq:ESS_2}
	\Var[Z] \le \tfrac12 \, \sum^n_{i=1}\E[(Z-Z'_k)^2].
\end{align}
This inequality is sharp: if $Z = \sum^n_{k=1}X_k$, then $$\E[(Z-Z'_k)^2] = 2 \Var[X_k],$$ and \eqref{eq:ESS_2} holds
with equality. This shows that sums of independent random variables $X_1,\ldots,X_n$ are the least concentrated among all functions of $X^n$.

\item For every random variable $U$ with a finite second moment and for all $c \in \R$, $$\Var[U] \le \E[(U-c)^2].$$ 
Thus, by conditioning on $\overbar{X}^k$, we let $Z_k = f_k(\overbar{X}^k)$ for arbitrary functions $f_k$ ($k \in \{1, \ldots, n\}$) of $n-1$ variables to obtain
\begin{align*}
\Var[Z | \overbar{X}^k] \leq \E[(Z-Z_k)^2 | \overbar{X}^k].
\end{align*}
From \eqref{tel-aviv}, this yields another variant of the Efron--Stein--Steele inequality:
\begin{align} \label{eq:ESS_3}
	\Var[Z] \le \sum^n_{i=1} \E[(Z-Z_k)^2].
\end{align}

\item Suppose we know that, by varying just one of the arguments of $f$ while holding all others fixed, we cannot change the value of
$f$ by more than some bounded amount. More precisely, suppose that there exist finite constants $c_1,\ldots,c_n \ge 0$, such that
\begin{align}\label{eq:bdd_diff}
& \sup_x f(x_1,\ldots,x_{i-1},x,x_{i+1},\ldots,x_n) \nonumber\\
& - \inf_x f(x_1,\ldots,x_{i-1},x,x_{i+1},\ldots,x_n) \le c_i
\end{align}
for all $i$ and all $x_1,\ldots,x_{i-1},x_{i+1},\ldots,x_n$. Then, by Lemma~\ref{lm:var_bound}, 
$$
\Var[Z|\overbar{X}^k] \le \tfrac14 \, c^2_k
$$
and therefore from \eqref{eq:ESS}, \eqref{tel-aviv}
\begin{align} \label{eq:ESS_4}
\Var[Z] \le \tfrac14 \, \sum^n_{k=1}c^2_k.
\end{align}
\end{enumerate}

\subsection*{Example: Kernel Density Estimation}
As an example of Efron--Stein--Steele inequalities in action, let us look at \textit{kernel density estimation} (KDE), a nonparametric procedure for estimating an unknown pdf $\phi$ of a real-valued random variable $X$ based on observing $n$ i.i.d.\ samples $X_1,\ldots,X_n$ drawn from $\phi$ \cite[Chap.~9]{Devroye_Lugosi_DE_book}. A \textit{kernel} is a function $K \colon \R \to \R^+$ satisfying the following conditions:
\begin{enumerate}
	\item It is integrable and normalized: $\int^\infty_{-\infty} K(u) \d u = 1$.
	\item It is even: $K(u)=K(-u)$ for all $u \in \R$.
	\item $\lim_{h \downarrow 0} \frac{1}{h}K(\frac{x-u}{h}) = \delta(x-u)$, where $\delta$ is the Dirac function.
\end{enumerate}
The KDE is given by
\begin{align*}
	\phi_n(x) = \frac{1}{nh} \sum^n_{i=1} K\left(\frac{x-X_i}{h}\right),
\end{align*}
where $h > 0$ is a parameter called the \textit{bandwidth}. From the properties of $K$, for each $x \in \R$ we have
\begin{align*}
	\E[\phi_n(x)] = \frac{1}{h}\int^\infty_{-\infty} K\left(\frac{x-u}{h}\right)
	\phi(u) \d u \xrightarrow{h \downarrow 0} \phi(x).
\end{align*}
Thus, we expect the KDE $\phi_n$ to concentrate around the true pdf $\phi$; to quantify this, let us examine the $L_1$ error
$$
Z_n = f(X_1,\ldots,X_n) = \int^\infty_{-\infty} |\phi_n(x) - \phi(x)| \d x.
$$
A simple calculation shows that $f$ satisfies \eqref{eq:bdd_diff} with $$c_1 = \ldots = c_n = \frac{2}{n},$$ and
therefore \eqref{eq:ESS_4} yields $$\Var[Z_n] \le \frac1{n}.$$

\vspace*{0.2cm}
Now, to take full advantage of the martingale method, we need to combine the martingale decomposition \eqref{eq:Doob} with the Chernoff bound. To proceed, we first note that the sequence of random variables $Z_k \deq \E[Z|X^k]$, for $k=0,1,\ldots,n$, is a martingale with respect to $X_1,\ldots,X_n$, i.e., $\E[Z_{k+1}|X^k] = Z_{k}$ for each $k$. Here is one frequently used concentration result:

\begin{theorem}[Azuma--Hoeffding inequality \cite{Azuma,Hoeffding}]
\label{thm: Azuma--Hoeffding inequality}
Let $\{Z_k\}^n_{k=0}$ be a real-valued martingale sequence. Suppose that the martingale increments $\xi_k = Z_k - Z_{k-1}$, for $k=1,\ldots,n$, are almost surely bounded, i.e., $|\xi_k| \le d_k$ a.s.\ for some constants $d_1,\ldots,d_n \ge 0$. Then
\begin{align}\label{eq:Azuma}
	\PP\left[ |Z_n - Z_0| \ge t\right] \le 2 \exp\left(-\frac{t^2}{2\sum^n_{k=1}d^2_k}\right), \quad \forall t > 0.
\end{align}
\end{theorem}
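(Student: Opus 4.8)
The plan is to combine the martingale decomposition $Z_n - Z_0 = \sum_{k=1}^n \xi_k$, with increments $\xi_k = Z_k - Z_{k-1}$, with the Chernoff bounding technique used above to establish subgaussian tails. First, I would note that the martingale property $\E[Z_k \mid X^{k-1}] = Z_{k-1}$ makes $\{\xi_k\}$ a martingale difference sequence, so that $\E[\xi_k \mid X^{k-1}] = 0$ for every $k$. Then, for a fixed $\lambda > 0$, Markov's inequality would give
\[
\PP[Z_n - Z_0 \ge t] \le e^{-\lambda t}\,\E\bigl[e^{\lambda(Z_n-Z_0)}\bigr] = e^{-\lambda t}\,\E\Bigl[e^{\lambda\sum_{k=1}^n \xi_k}\Bigr],
\]
so that everything reduces to bounding the moment-generating function of $Z_n - Z_0$, which I would handle by a tensorization-style argument: peel off the increments one at a time.

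For the key step I would condition on $X^{n-1}$. Given $X^{n-1}$, the increment $\xi_n$ has conditional mean zero and is almost surely confined to an interval of length at most $2d_n$; applying Hoeffding's Lemma~\ref{lm:Hoeffding} to the regular conditional law of $\xi_n$ given $X^{n-1}$ (whose proof carries over verbatim with conditional expectations) would yield
\[
\E\bigl[e^{\lambda \xi_n}\bigm| X^{n-1}\bigr] \le \exp\left(\tfrac18 \lambda^2 (2d_n)^2\right) = \exp\left(\tfrac12 \lambda^2 d_n^2\right)
\]
almost surely. By the tower property this gives
\[
\E\Bigl[e^{\lambda \sum_{k=1}^n \xi_k}\Bigr] = \E\Bigl[e^{\lambda \sum_{k=1}^{n-1}\xi_k}\, \E\bigl[e^{\lambda \xi_n}\bigm| X^{n-1}\bigr]\Bigr] \le e^{\lambda^2 d_n^2/2}\, \E\Bigl[e^{\lambda \sum_{k=1}^{n-1}\xi_k}\Bigr],
\]
and iterating from $k=n$ down to $k=1$ would produce $\E[e^{\lambda(Z_n - Z_0)}] \le \exp\bigl(\tfrac12 \lambda^2 \sum_{k=1}^n d_k^2\bigr)$; in other words, $Z_n - Z_0$ is $\sigma^2$-subgaussian with $\sigma^2 = \sum_{k=1}^n d_k^2$.

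To finish, I would invoke the subgaussian tail bound established above: substituting the moment-generating-function estimate back into the Chernoff bound and optimizing the exponent $e^{-\lambda t + \lambda^2 \sigma^2/2}$ at $\lambda = t/\sigma^2 > 0$ gives $\PP[Z_n - Z_0 \ge t] \le e^{-t^2/(2\sigma^2)}$. Running the identical argument on the martingale $\{-Z_k\}_{k=0}^n$, whose increments $-\xi_k$ obey the same almost-sure bound $|\xi_k| \le d_k$, would give $\PP[Z_n - Z_0 \le -t] \le e^{-t^2/(2\sigma^2)}$, and a union bound over the two one-sided events would produce the factor of $2$ in \eqref{eq:Azuma}.

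The step that will require the most care is the conditional application of Hoeffding's lemma: one must check that for almost every value of $X^{k-1}$ the conditional distribution of $\xi_k$ is a bona fide probability measure supported in $[-d_k, d_k]$ with mean zero, so that the one-dimensional lemma applies pointwise and the resulting bound can then be integrated against the law of $X^{k-1}$. Beyond this measure-theoretic bookkeeping, the proof is just the Chernoff bound together with tensorization of the moment-generating function, and requires no idea beyond the basic toolbox already developed.
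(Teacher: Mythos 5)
Your proposal is correct and follows essentially the same route as the paper: conditionally apply Hoeffding's lemma to each increment $\xi_k$ (an interval of length $2d_k$, giving $\ln \E[e^{\lambda \xi_k}|X^{k-1}] \le \tfrac12 \lambda^2 d_k^2$), peel off the terms via the tower property, and finish with the Chernoff bound plus a union bound for the two-sided statement. No gaps.
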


\noindent The main idea behind the proof is to apply Hoeffding's lemma to each term $\xi_k$ in the Doob martingale decomposition \eqref{eq:Doob}, conditionally on $X^{k-1}$: for all $\lambda > 0$
	\begin{align*}
		\E[e^{\lambda(Z_n - Z_0)}] &= \E\left[ \prod^n_{k=1}e^{\lambda \xi_k}\right] \\
		&= \E\left[ \prod^{n-1}_{k=1} e^{\lambda \xi_k} \E[e^{\lambda \xi_n}|X^{n-1}]\right].
	\end{align*}
Since $|\xi_n| \le d_n$, we have $\ln \E[e^{\lambda \xi_n}|X^{n-1}] \le \frac{\lambda^2 d^2_n}{2}$, by Hoeffding's lemma. Continuing in this manner and peeling off the terms $\xi_k$ one by one, we can apply the Chernoff bound and obtain \eqref{eq:Azuma}.
However, the Azuma-Hoeffding inequality is not tight in general (e.g., if $t > \sum_{k=1}^n d_k$, then the probability in the left side of
\eqref{eq:Azuma} is zero, due to the boundedness of the $\xi_k$'s, whereas its bound in the right side of \eqref{eq:Azuma} is strictly positive). One way to tighten it is to make use of additional information on the conditional variances along the martingale sequence \cite{McDiarmid_bounded_differences_Martingales_1989}:
\begin{theorem}[McDiarmid]
\label{thm: McDiarmid 89}
Let $\{Z_k\}^\infty_{k=0}$ be a martingale satisfying the following two conditions for some constants $d,\sigma > 0$:
	\begin{itemize}
		\item $|\xi_k| \le d$ for all $k$.
		\item $\Var[Z_k|X^{k-1}] = \E[|\xi_k|^2|X^{k-1}] \le \sigma^2$ for all $k$.
	\end{itemize}
	Then, for every $\alpha \ge 0$,
	\begin{align*}
		\PP\left[|Z_n - Z_0| \ge \alpha n\right] \le 2\exp\left(-n d\Big(\frac{\delta+\gamma}{1+\gamma} \Big\| \frac{\gamma}{1+\gamma} \Big)\right),
	\end{align*}
	where $\gamma = \sigma^2/d^2$, $\delta = \alpha/d$, and $d(p \| q) \deq p \ln \frac{p}{q} + (1-p) \ln \frac{1-p}{1-q}$ is the binary relative entropy function.
\end{theorem}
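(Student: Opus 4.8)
The plan is to run the same scheme as in the proof of the Azuma--Hoeffding inequality (Theorem~\ref{thm: Azuma--Hoeffding inequality})---Doob decomposition followed by the Chernoff bound, peeling off one increment at a time---but to replace Hoeffding's lemma by a one-step bound on the conditional moment generating function that also exploits the conditional-variance hypothesis. Writing $Z_n - Z_0 = \sum_{k=1}^n \xi_k$ with $\xi_k = Z_k - Z_{k-1}$, the sequence $\{\xi_k\}$ is a martingale difference sequence with $\E[\xi_k\mid X^{k-1}] = 0$, $|\xi_k|\le d$, and $\E[\xi_k^2\mid X^{k-1}]\le\sigma^2$ a.s. Everything reduces to the one-step estimate: if $V$ is a random variable with $\E[V]=0$, $|V|\le d$ a.s., and $\E[V^2]\le\sigma^2=\gamma d^2$, then
\begin{align}\label{eq:McD-step}
\E\bigl[e^{\lambda V}\bigr]\ \le\ \frac{\gamma\,e^{\lambda d}+e^{-\gamma\lambda d}}{1+\gamma}\qquad\text{for all }\lambda\ge 0 .
\end{align}
(One may as well assume $\gamma\le 1$, since $|V|\le d$ already forces $\E[V^2]\le d^2$, and in any case the right-hand side of \eqref{eq:McD-step} is nondecreasing in $\gamma$.) Applying \eqref{eq:McD-step} to $\xi_n$ conditionally on $X^{n-1}$, then to $\xi_{n-1}$ conditionally on $X^{n-2}$, and so on yields $\E[e^{\lambda(Z_n-Z_0)}]\le\bigl(\tfrac{\gamma e^{\lambda d}+e^{-\gamma\lambda d}}{1+\gamma}\bigr)^{n}$ for every $\lambda\ge 0$.

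To prove \eqref{eq:McD-step} I would argue by extremal distributions: maximizing $\E[e^{\lambda V}]$ over all laws of $V$ supported in $[-d,d]$ with $\E[V]=0$ and $\E[V^2]\le\sigma^2$ is a linear program in the distribution, and the maximizer is the two-point law $\mu^\star$ putting mass $\tfrac{\gamma}{1+\gamma}$ at $d$ and mass $\tfrac{1}{1+\gamma}$ at $-\gamma d$ (the unique zero-mean two-point law supported on $\{-\gamma d,\,d\}$; its second moment equals $\gamma d^2$ exactly), whose moment generating function is precisely the right-hand side of \eqref{eq:McD-step}. The optimality certificate is a quadratic $Q(x)=A+Bx+Cx^2$ with $C\ge 0$, determined by $Q(d)=e^{\lambda d}$, $Q(-\gamma d)=e^{-\gamma\lambda d}$, and $Q'(-\gamma d)=\lambda e^{-\gamma\lambda d}$: since $(Q-e^{\lambda x})''=2C-\lambda^2e^{\lambda x}$ changes sign exactly once on $\R$, one checks that $Q(x)\ge e^{\lambda x}$ throughout $[-d,d]$, whence $\E[e^{\lambda V}]\le\E[Q(V)]=A+C\,\E[V^2]\le A+C\sigma^2=\E_{\mu^\star}[Q(V)]=\E_{\mu^\star}[e^{\lambda V}]$, which is \eqref{eq:McD-step}. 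This one-step estimate is the main obstacle: unlike Hoeffding's lemma it has to use the boundedness and the variance bound simultaneously, and pinning down the exact constant requires this somewhat delicate convexity/duality argument; everything else is routine peeling and Legendre-transform bookkeeping.

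It remains to run the Chernoff bound and identify the rate function. For $\lambda\ge 0$,
\begin{align*}
\PP\bigl[Z_n-Z_0\ge\alpha n\bigr]\ \le\ e^{-\lambda\alpha n}\,\E\bigl[e^{\lambda(Z_n-Z_0)}\bigr]\ \le\ \exp\!\left(-n\Bigl[\lambda\alpha-\ln\tfrac{\gamma e^{\lambda d}+e^{-\gamma\lambda d}}{1+\gamma}\Bigr]\right),
\end{align*}
so it suffices to show $\sup_{\lambda\ge 0}\bigl[\lambda\alpha-\ln\tfrac{\gamma e^{\lambda d}+e^{-\gamma\lambda d}}{1+\gamma}\bigr]=d\bigl(\tfrac{\delta+\gamma}{1+\gamma}\,\|\,\tfrac{\gamma}{1+\gamma}\bigr)$ with $\delta=\alpha/d$. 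This is a Legendre-transform computation: substituting $\mu=\lambda d$ turns $\ln\tfrac{\gamma e^{\lambda d}+e^{-\gamma\lambda d}}{1+\gamma}$ into the logarithmic moment generating function of the centered random variable $Y$ equal to $1$ with probability $q=\tfrac{\gamma}{1+\gamma}$ and to $-\gamma$ with probability $1-q$; writing $Y=(1+\gamma)B-\gamma$ with $B\sim\mathrm{Bernoulli}(q)$ reduces the supremum to the classical Cram\'er conjugate of the Bernoulli log-MGF evaluated at the point $\tfrac{\delta+\gamma}{1+\gamma}$, which is exactly $d\bigl(\tfrac{\delta+\gamma}{1+\gamma}\,\|\,\tfrac{\gamma}{1+\gamma}\bigr)$; the hypothesis $\alpha\ge 0$ guarantees $\tfrac{\delta+\gamma}{1+\gamma}\ge\tfrac{\gamma}{1+\gamma}$, so the supremum is attained at some $\lambda\ge 0$. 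Finally, applying the same argument to the martingale $\{-Z_k\}$---which satisfies the same two hypotheses and, since the rate function depends only on $\alpha$, yields the identical bound---controls $\PP[Z_0-Z_n\ge\alpha n]$, and a union bound over the two tails produces the factor $2$, completing the proof.
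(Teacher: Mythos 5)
The paper states this theorem without proof — it only remarks, after sketching the Azuma--Hoeffding proof, that using the conditional-variance information tightens that bound. So there is no ``paper proof'' to compare against. Your argument is correct, and it is the standard route (essentially McDiarmid's original): the same Doob-decomposition-plus-peeling-plus-Chernoff scheme that the paper sketches for Theorem~\ref{thm: Azuma--Hoeffding inequality}, but with Hoeffding's lemma replaced by the sharper Bennett-type one-step estimate $\E[e^{\lambda V}]\le\frac{\gamma e^{\lambda d}+e^{-\gamma\lambda d}}{1+\gamma}$ for zero-mean $V$ bounded by $d$ with second moment at most $\gamma d^2$. Your quadratic-majorant certificate is sound: $C\ge 0$ because $C$ is a second divided difference of the convex function $e^{\lambda\cdot}$, and the sign pattern of $(Q-e^{\lambda\cdot})''$ together with the double contact at $-\gamma d$ and simple contact at $d$ forces the inflection point into $(-\gamma d,d)$ and gives $Q\ge e^{\lambda\cdot}$ on $[-d,d]$. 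The reduction to $\gamma\le 1$ via monotonicity of the right-hand side in $\gamma$ is the one assertion worth spelling out: with $u=\lambda d\ge 0$, the derivative in $\gamma$ has sign $e^{(1+\gamma)u}-\bigl(1+(1+\gamma)u\bigr)\ge 0$. The Cram\'er-conjugate computation identifying the rate as $d\bigl(\tfrac{\delta+\gamma}{1+\gamma}\,\big\|\,\tfrac{\gamma}{1+\gamma}\bigr)$, with $\alpha\ge 0$ ensuring the optimizing $\lambda$ is nonnegative, and the reflection $Z_k\mapsto -Z_k$ plus union bound for the factor $2$, are all correct.
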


\par
Note that, in contrast to Theorem~\ref{thm: Azuma--Hoeffding inequality}, the martingale increments $\{\xi_k\}$
in Theorem~\ref{thm: McDiarmid 89} should be bounded by a constant $d$ which is independent of $k$.

\par
A prominent application of the martingale method is a powerful
inequality due to McDiarmid \cite{McDiarmid_bounded_differences_Martingales_1989}, also known as the bounded difference inequality:

\begin{theorem}[McDiarmid's inequality]
\label{thm: McDiarmid bounded_differences 89}
If $f$ satisfies the bounded difference property \eqref{eq:bdd_diff}, and $X_1, \ldots, X_n$
are independent random variables, then for all $t > 0$
	\begin{align}\label{eq:McD}
		\PP\left[ \left|f(X^n) - \E[f(X^n)] \right|  \ge t \right] \le 2\exp\left(-\frac{2t^2}{\sum^n_{k=1}c^2_k}\right).
	\end{align}
\end{theorem}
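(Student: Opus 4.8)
The plan is to combine the Doob martingale decomposition \eqref{eq:Doob} with the Chernoff bound, exactly as in the sketch following the Azuma--Hoeffding inequality, but to replace the crude estimate $|\xi_k| \le c_k$ by a sharper control on the \emph{range} of each increment; this is what yields the factor of $2$ in the exponent of \eqref{eq:McD}, in contrast to what a direct appeal to Azuma--Hoeffding with $d_k = c_k$ would give. It suffices to bound $\PP[f(X^n) - \E[f(X^n)] \ge t]$: applying this estimate to $-f$, which also satisfies \eqref{eq:bdd_diff} with the same constants $c_1, \ldots, c_n$, controls the lower tail, and a union bound produces the factor of $2$ on the right-hand side. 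Throughout write $Z = f(X^n)$, $Z_k = \E[Z \mid X^k]$, and $\xi_k = Z_k - Z_{k-1}$.

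The key step --- and the one place where both independence and the bounded-difference property enter --- is to show that, conditionally on $X^{k-1}$, the increment $\xi_k$ takes values in an interval of length at most $c_k$. By independence of $X_1, \ldots, X_n$, the conditional expectation $Z_k = \E[f(X^n) \mid X^k]$ equals $g_k(X^k)$, where $g_k(x_1, \ldots, x_k) \deq \E[f(x_1, \ldots, x_k, X_{k+1}, \ldots, X_n)]$ is the partial average of $f$ over the remaining coordinates. Hence, for a fixed value $X^{k-1} = x^{k-1}$,
\begin{align*}
\xi_k = g_k(x^{k-1}, X_k) - \E_{X_k}\bigl[ g_k(x^{k-1}, X_k) \bigr],
\end{align*}
so $\xi_k$ is a centered version of the random variable $h(X_k) \deq g_k(x^{k-1}, X_k)$. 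Applying \eqref{eq:bdd_diff} pointwise under the expectation defining $g_k$ gives $h(x) - h(x') \le c_k$ for all $x, x'$, hence $\sup_x h(x) - \inf_x h(x) \le c_k$, and therefore $\xi_k$ lies almost surely in an interval of length at most $c_k$.

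It then remains to peel off the increments one by one, as in the Azuma--Hoeffding sketch. For $\lambda > 0$,
\begin{align*}
\E\bigl[ e^{\lambda(Z - \E[Z])} \bigr] = \E\left[ \prod_{k=1}^{n-1} e^{\lambda \xi_k} \, \E\bigl[ e^{\lambda \xi_n} \mid X^{n-1} \bigr] \right],
\end{align*}
and since, conditionally on $X^{n-1}$, the variable $\xi_n$ is centered and supported on an interval of length $c_n$, Hoeffding's lemma (Lemma~\ref{lm:Hoeffding}) gives $\E[ e^{\lambda \xi_n} \mid X^{n-1} ] \le e^{\lambda^2 c_n^2 / 8}$, a deterministic bound. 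Iterating this over $k = n, n-1, \ldots, 1$ yields
\begin{align*}
\E\bigl[ e^{\lambda(Z - \E[Z])} \bigr] \le \exp\left( \frac{\lambda^2}{8} \sum_{k=1}^n c_k^2 \right),
\end{align*}
i.e., $Z - \E[Z]$ is $\sigma^2$-subgaussian with $\sigma^2 = \tfrac14 \sum_{k=1}^n c_k^2$. The Chernoff bound then gives $\PP[Z \ge \E[Z] + t] \le \exp\bigl( -\tfrac{2t^2}{\sum_k c_k^2} \bigr)$, and combining this with the same estimate for $-f$ finishes the proof. The main obstacle is the interval-length estimate of the second paragraph, which is what lets us invoke Hoeffding's lemma with variance proxy $c_k^2/4$ rather than $c_k^2$; the remainder is the routine peeling argument already outlined for Azuma--Hoeffding.
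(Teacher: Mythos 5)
Your proof is correct and follows exactly the route the paper alludes to: repeat the Azuma--Hoeffding peeling argument, but exploit independence to show that, conditionally on $X^{k-1}$, the centered increment $\xi_k$ lives in an interval of length $c_k$ (rather than merely satisfying $|\xi_k|\le c_k$, i.e., an interval of length $2c_k$), so that Hoeffding's lemma yields the variance proxy $c_k^2/4$ and hence the fourfold sharper exponent the paper advertises. Your identification and justification of the interval-length step is precisely the missing detail the paper leaves to the reader, and the rest of your argument (conditional Hoeffding, peeling, Chernoff, union bound for the two tails) matches the paper's sketch.
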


\noindent The strategy of the proof is similar to the one used to derive the Azuma--Hoeffding inequality. In fact, we could have used the Azuma--Hoeffding inequality to bound the tail probability in \eqref{eq:McD}; however, McDiarmid's inequality provides a factor of $4$ improvement in the exponent of the bound when $f$ is a function of $n$ independent random variables.

Here is a nice information-theoretic application of McDiarmid's inequality \cite{ISIT2013_Raginsky_Sason}. Consider a discrete memoryless channel (DMC) with input alphabet $\sX$, output alphabet $\sY$, and strictly positive transition probabilities $T(y|x)$. Fix an arbitrary distribution $P_{X^n}$ of the input $n$-block $X^n$, and let $P_{Y^n}$ denote the resulting output distribution. Then, for every input $n$-block $x^n \in \sX^n$,
\begin{align}\label{eq:KL_concentration}
\PP_{Y^n|X^n=x^n}\Big[\log \frac{P_{Y^n|X^n=x^n}(Y^n)}{P_{Y^n}(Y^n)} 
\ge D(P_{Y^n|X^n=x^n}\|P_{Y^n}) + t \Big] \le \exp\left(-\frac{2t^2}{n c(T)}\right),
\end{align}
where
\begin{align}\label{eq:channel_ratio}
	c(T) \deq 2\max_{x,x' \in \sX}\max_{y \in \sY} \log \frac{T(y|x)}{T(y|x')}.
\end{align}
\begin{proof}
Let us consider the function
$$
f(y_1,\ldots,y_n) \deq \log \frac{P_{Y^n|X^n=x^n}(y^n)}{P_{Y^n}(y^n)}
$$
(recall that the input block $x^n$ is fixed). A simple calculation shows that this $f$ has bounded differences with
$$c_1 = \ldots = c_n = c(T).$$ Moreover, since the channel is memoryless, $Y_1,\ldots,Y_n$ are independent random variables under $P_{Y^n|X^n=x^n}$ (although not under $P_{Y^n}$, unless $P_{X^n}$ is a product distribution). Applying McDiarmid's inequality, we get \eqref{eq:KL_concentration}.
\end{proof}

The martingale method has also been used successfully to analyze concentration properties of random codes around their ensemble averages. The performance analysis of a particular code is usually difficult, especially for codes of large block lengths. Availability of a concentration result for the performance of capacity-approaching code ensembles under low-complexity decoding algorithms, as it is the case with low-density parity-check (LDPC) codes \cite{RiU_book}, validates the use of the density evolution technique as an analytical tool to assess the performance of individual codes from a code ensemble whose block length is sufficiently large, and to assess their asymptotic
gap to capacity. However, it should be borne in mind that the current concentration results for codes defined on graphs, which
mainly rely on the Azuma--Hoeffding inequality, are weak since in practice concentration is observed at much shorter block lengths.

Here are two illustrative examples of the use of martingale concentration inequalities in the analysis of code performance. The first result, due to Sipser and Spielman \cite{SipserS96}, is useful for assessing the performance of bit-flipping decoding algorithms for expander codes:

\begin{theorem}[Sipser and Spielman]
Let $\mathcal{G}$ be a bipartite graph that is chosen uniformly at random from
the ensemble of bipartite graphs with $n$ vertices on the left, a left degree $l$,
and a right degree $r$.
Let $\alpha \in (0,1)$ and $\delta > 0$ be fixed numbers.
Then, with probability at least $1-\exp(-\delta n)$, all sets of
$\alpha n$ vertices on the left side of $\mathcal{G}$
are connected to at least
\begin{equation*}
n \left[ \frac{l \bigl(1-(1-\alpha)^r\bigr)}{r} - \sqrt{2l \alpha \,
\bigl(h(\alpha)+\delta\bigr)} \, \right]
\end{equation*}
vertices (neighbors) on the right side of $\mathcal{G}$,
where $h$ is the binary entropy function to base~$e$
(i.e., $h(x) = -x \ln(x) - (1-x) \ln(1-x), \, x \in [0,1]$).
\label{theorem: expansion}
\end{theorem}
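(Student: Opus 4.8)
\medskip
\noindent\textit{Proof strategy (a proposal).}
The plan is to fix an arbitrary set $S$ of $\alpha n$ left vertices, prove a sharp lower-tail bound on the number $N(S)$ of right vertices adjacent to $S$, and then take a union bound over all $\binom{n}{\alpha n}$ choices of $S$. Since $\binom{n}{\alpha n}\le e^{n h(\alpha)}$, it is enough to show that for each fixed $S$
\begin{align*}
\PP\!\left[\, N(S)\le \tfrac{nl}{r}\bigl(1-(1-\alpha)^r\bigr)-n\sqrt{2l\alpha\bigl(h(\alpha)+\delta\bigr)}\,\right]\le e^{-n(h(\alpha)+\delta)},
\end{align*}
because then the union bound gives $e^{n h(\alpha)}\cdot e^{-n(h(\alpha)+\delta)}=e^{-\delta n}$, as claimed. (I ignore divisibility issues, i.e.\ I pretend $\alpha n$ and $nl/r$ are integers.)

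First I would pin down $\E[N(S)]$ using the ``half-edge'' (configuration-model) description of the uniformly random $(l,r)$-biregular bipartite graph: there are $m\deq nl/r$ right vertices, $nl$ left half-edges (of which $\alpha n l$ lie in $S$) and $mr=nl$ right half-edges, joined by a uniformly random perfect matching. A fixed right vertex $u$ is \emph{not} a neighbour of $S$ exactly when each of its $r$ half-edges is matched outside $S$, which has probability $\prod_{i=0}^{r-1}\frac{(1-\alpha)nl-i}{nl-i}\le(1-\alpha)^r$ (every factor is at most $1-\alpha$). Summing over the $m$ right vertices, $\E[N(S)]\ge m\bigl(1-(1-\alpha)^r\bigr)=\tfrac{nl}{r}\bigl(1-(1-\alpha)^r\bigr)$, so it suffices to prove the one-sided deviation bound $\PP[\,N(S)\le\E[N(S)]-t\,]\le e^{-t^2/(2\alpha n l)}$ with $t=n\sqrt{2l\alpha(h(\alpha)+\delta)}$.

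For the concentration step I would expose the $K\deq\alpha n l$ half-edges emanating from $S$ one at a time, forming the Doob martingale $Z_j\deq\E[\,N(S)\mid\text{first }j\text{ exposed edges}\,]$, $j=0,1,\ldots,K$, with $Z_0=\E[N(S)]$ and $Z_K=N(S)$. Note that $N(S)$ is precisely the number of distinct right vertices hit by these $K$ edges, a count that changes by at most $1$ when one edge's endpoint is moved. The key point is to promote this coordinatewise Lipschitz property to the martingale-increment bound $|Z_j-Z_{j-1}|\le 1$ even though the half-edge endpoints are (negatively) dependent: re-sampling the $j$-th exposed edge and coupling the rest of the matching via the single transposition this forces changes the list of $S$-edge endpoints in at most one coordinate --- and not at all when the transposition merely swaps two $S$-edges --- so the conditional expectation of $N(S)$ cannot move by more than $1$. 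Given increments bounded by $1$, the one-sided form of the Azuma--Hoeffding inequality (Theorem~\ref{thm: Azuma--Hoeffding inequality}) yields $\PP[\,N(S)\le\E[N(S)]-t\,]\le\exp\bigl(-t^2/(2K)\bigr)=\exp\bigl(-t^2/(2\alpha n l)\bigr)$, which is exactly the bound needed above; combining with the union bound finishes the argument.

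The hard part is this martingale-increment estimate: because the exposed endpoints are dependent, McDiarmid's bounded-difference inequality (Theorem~\ref{thm: McDiarmid bounded_differences 89}) does not apply verbatim, and one must argue by coupling that the edge-exposure Doob martingale still has unit increments. It is precisely this reliance on the Azuma--Hoeffding constant $\tfrac12\sum d_k^2$ in the exponent (rather than the sharper constant available for independent coordinates) that produces the factor $\sqrt{2l\alpha(h(\alpha)+\delta)}$ in the statement. Everything else --- the hypergeometric bound on $\E[N(S)]$ and the entropy estimate $\binom{n}{\alpha n}\le e^{n h(\alpha)}$ --- is routine.
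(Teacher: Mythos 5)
Your proposal is correct and follows exactly the route the paper sketches for this theorem: a Doob (``neighbor-exposure'') martingale over the $K=\alpha n l$ edges leaving a fixed set $S$, unit increments justified by a transposition coupling of the random matching, the one-sided Azuma--Hoeffding bound $e^{-t^2/(2\alpha n l)}$ with $t=n\sqrt{2l\alpha(h(\alpha)+\delta)}$, and a final union bound using $\binom{n}{\alpha n}\le e^{nh(\alpha)}$. The hypergeometric lower bound $\E[N(S)]\ge\tfrac{nl}{r}\bigl(1-(1-\alpha)^r\bigr)$ and the exponent bookkeeping both check out.
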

\noindent The proof revolves around the analysis of the so-called \textit{neighbor exposure martingale} via the Azuma--Hoeffding inequality to bound the probability that the number
of neighbors deviates significantly from its mean value.

\par
Let ${\rm LDPC}(n, \lambda, \rho)$ denote an LDPC code ensemble of block length $n$, respectively, and with left and right degree distributions $\lambda$ and $\rho$ from the edge perspective (i.e., $\lambda_i$ designates the fraction of edges which are connected to a variable node of degree $i$, and $\rho_i$ designates the fraction of edges which are connected to parity-check nodes of degree $i$).

\par
The second result, due to Richardson and Urbanke \cite{RichardsonU2001}, concerns the performance of message-passing decoding algorithms for LDPC codes.

\begin{theorem}[Richardson--Urbanke]
Let $\mathcal{C}$, a code chosen uniformly at random from the
ensemble ${\rm LDPC}(n, \lambda, \rho)$, be used for transmission
over a memoryless binary-input output-symmetric (MBIOS) channel.
Assume that the decoder performs $\ell$ iterations of message-passing
decoding, and let $P_{{\rm b}}(\mathcal{C}, \ell)$ denote the
resulting bit error probability. Then, for every $\delta
> 0$, there exists some $\alpha = \alpha(\lambda,
\rho, \delta, \ell) > 0$ (independent of the block length $n$),
such that
\begin{equation*}
\PP\left[ |P_{{\rm b}}(\mathcal{C}, \ell) -
\E_{{\rm LDPC}(n,\lambda,\rho)}
[P_{{\rm b}}(\mathcal{C}, \ell)] | \geq \delta \right] \leq
e^{-\alpha n}
\end{equation*}
\end{theorem}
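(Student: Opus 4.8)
The plan is to expose the randomness of the random code one edge at a time and build a Doob martingale to which the Azuma--Hoeffding inequality (Theorem~\ref{thm: Azuma--Hoeffding inequality}) applies. A code $\mathcal{C}$ drawn from ${\rm LDPC}(n,\lambda,\rho)$ is specified by a uniformly random matching (permutation) between the $E$ variable-node sockets and the $E$ check-node sockets, where $E = \bar{d} \, n$ with $\bar{d} = \bigl(\sum_i \lambda_i / i\bigr)^{-1}$; since $\lambda$ and $\rho$ have finite support (the degrees are bounded), $E = \Theta(n)$. Fix an ordering of the variable-node sockets, let $\pi_i$ be the check-socket matched to the $i$-th variable socket, and set $Z_i \deq \E\bigl[P_{\rm b}(\mathcal{C},\ell) \, \big| \, \pi_1,\ldots,\pi_i\bigr]$ for $i = 0,1,\ldots,E$. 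Then $\{Z_i\}$ is a martingale with $Z_0 = \E_{{\rm LDPC}(n,\lambda,\rho)}[P_{\rm b}(\mathcal{C},\ell)]$ and $Z_E = P_{\rm b}(\mathcal{C},\ell)$, so it suffices to control the increments $\xi_i = Z_i - Z_{i-1}$.

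The key structural fact is the \emph{locality} of message-passing decoding: after $\ell$ iterations, the decoded value at a variable node $v$ is a function only of the channel outputs and graph edges inside the directed neighborhood of $v$ of depth $2\ell$. Because the degrees are bounded, this neighborhood contains at most $N = N(\ell,\lambda,\rho)$ edges, a constant independent of $n$. Consequently, altering a single edge of the Tanner graph changes the depth-$2\ell$ neighborhood of at most $M = M(\ell,\lambda,\rho)$ variable nodes, and since $P_{\rm b}(\mathcal{C},\ell)$ is the average over the $n$ variable nodes of the per-node error probabilities, which take values in $[0,1]$, one has $\bigl|P_{\rm b}(\mathcal{C}',\ell) - P_{\rm b}(\mathcal{C}'',\ell)\bigr| \le M/n$ whenever the Tanner graphs of $\mathcal{C}'$ and $\mathcal{C}''$ differ in a single edge. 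Converting this into a bound on $|\xi_i|$ requires the standard but delicate step of exhibiting, for any two partial matchings of length $i$ agreeing in their first $i-1$ entries, a measure-preserving bijection between their sets of completions under which the resulting graphs differ in only a bounded number of edges; this yields $|\xi_i| \le c \, M/n =: d_i$ for an absolute constant $c$. (If instead one wants concentration of the empirical bit-error count, which is random in both the code and the channel realization, one simply appends a second phase to the martingale that reveals the $n$ channel outputs one at a time, each again a bounded-difference step of size $O(1/n)$.)

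Finally, applying the Azuma--Hoeffding inequality with $d_i \le cM/n$ and $E = \bar{d}\, n$ increments gives
\begin{align*}
\PP\Bigl[\bigl|P_{\rm b}(\mathcal{C},\ell) - \E_{{\rm LDPC}(n,\lambda,\rho)}[P_{\rm b}(\mathcal{C},\ell)]\bigr| \ge \delta \Bigr]
&\le 2 \exp\!\left( - \frac{\delta^2}{2 \, \bar{d}\, n \, (cM/n)^2} \right) \\
&= 2 \exp\!\left( - \frac{\delta^2 \, n}{2 c^2 M^2 \bar{d}} \right),
\end{align*}
and absorbing the factor $2$ (replacing the constant in the exponent by, say, half its value, which is valid for all $n$ large enough, and enlarging $\alpha$ trivially handles small $n$) produces the claimed bound $e^{-\alpha n}$ with $\alpha = \alpha(\lambda,\rho,\delta,\ell) > 0$ independent of the block length $n$.

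I expect the combinatorial bookkeeping in the second paragraph to be the main obstacle: showing that revealing one coordinate of the random matching perturbs the conditional expectation $Z_i$ by only $O(1/n)$ is not immediate, because a partially revealed matching is not itself a legal Tanner graph, so one cannot directly invoke the bounded-difference property \eqref{eq:bdd_diff} for the function computed by the decoder. The resolution --- pairing up completions of the two partial matchings so that the corresponding graphs are ``close'' and applying the locality estimate on each pair --- is precisely where the degree-boundedness of the ensemble and the finiteness of the number of iterations $\ell$ are essential, and it is also the reason $\alpha$ is allowed to depend on $\lambda$, $\rho$, and $\ell$ but not on $n$.
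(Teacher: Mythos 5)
Your proposal is correct and matches the approach the paper points to: the paper's one-line remark that the Richardson--Urbanke result "applies the Azuma--Hoeffding inequality to a certain martingale sequence" refers exactly to the edge-revealing Doob martingale you construct, combined with the locality of message-passing decoding to bound the increments. Your sketch, including the coupling/bijection step needed to convert the single-edge bounded-difference property into a bound on $|\xi_i|$ and the optional second phase of channel-output exposure, is a faithful reconstruction of the standard Richardson--Urbanke argument.
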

\noindent The proof also applies the Azuma--Hoeffding inequality
to a certain martingale sequence. Some additional references on
the use of the martingale method in the context of codes include
\cite{SipserS96,RichardsonU2001,RiU_book,LubyMSS_IT01,KavcicMM_IT2003,
Montanari05,MeassonMU08,Sason_Eshel_ISIT11}. For more details, we
refer the reader to our monograph \cite{Raginsky_Sason_FnT_2nd}.

\section{The entropy method and logarithmic Sobolev inequalities}

The entropy method, as its name suggests, relies on information-theoretic techniques to control the logarithmic moment-generating function $\psi$ directly in terms of certain relative entropies. Recall our roadmap for proving a concentration inequality for $Z = f(X)$, where $X$ is an arbitrary random variable:
\begin{itemize}
	\item Derive a tight quadratic bound on $\psi$:
	$$
	\psi(\lambda) = \log \E[e^{\lambda(Z-\E[Z])}] \le \tfrac12 \, \lambda^2 \sigma^2.
	$$
	\item Use the Chernoff bound to get
	$$
	\PP[Z \ge \E[Z] + t] \le e^{-\frac{t^2}{2\sigma^2}}, \quad \forall \, t \ge 0.
	$$
\end{itemize}
Let $P = \cL(X)$, and introduce the tilted distribution $P^{(\lambda f)}$:
$$
\d P^{(\lambda f)} = \frac{e^{\lambda f} \d P}{\E_P[e^{\lambda f}]}.
$$
The entropy method revolves around the relative entropy $D(P^{(\lambda f)} \| P)$, and has two ingredients: (1) the Herbst argument, and (2) tensorization.

We start with the Herbst argument (the name refers to an unpublished note by I.~Herbst that proposed the use of such an argument in the context of mathematical physics of quantum fields). Let us examine the relative entropy:
\begin{align*}
	D(P^{(\lambda f)} \| P) &= \int \d P^{(\lambda f)} \log \frac{\d P^{(\lambda f)}}{\d P} \\
	&= \E^{(\lambda f)}\left[ \lambda f(X) - \psi(\lambda)\right] \\
	&= \lambda \psi'(\lambda) - \psi(\lambda),
\end{align*}
where $\E^{(\lambda f)}[\cdot]$ denotes expectation with respect to the tilted distribution $P^{(\lambda f)}$.
Now, with a bit of foresight, we rewrite the last expression as
$$
\lambda \psi'(\lambda) - \psi(\lambda) = \lambda^2 \frac{\d}{\d\lambda}\left(\frac{\psi(\lambda)}{\lambda}\right).
$$
Thus, we end up with the identity
\begin{align*}
    D(P^{(\lambda f)} \| P) = \lambda^2 \, \frac{\mathrm{d}}{\mathrm{d} \lambda} \left( \frac{\psi(\lambda)}{\lambda} \right).
\end{align*}
Integrating and using the fact that $\lim_{\lambda \to 0} \frac{\psi(\lambda)}{\lambda} = 0$ (which can be proved using l'Hopital's rule), we get
\begin{align}\label{eq:Herbst_identity}
	\psi(\lambda) = \lambda \int^\lambda_0 \frac{D(P^{(tf)} \| P)}{t^2} \d t.
\end{align}
Appealing to the Chernoff bound, we end up with the following:
\begin{lemma}[The Herbst argument] Suppose that $Z = f(X)$ is such that
	\begin{align}\label{eq:Herbst_bound}
		D(P^{(\lambda f)} \| P) \le \tfrac12 \, \lambda^2 \sigma^2, \qquad \forall \, \lambda \ge 0.
	\end{align}
	Then $Z$ is $\sigma^2$-subgaussian, and therefore
	\begin{align}
		\PP\left[f(X) \ge \E[f(X)] + t \right] \le e^{-\frac{t^2}{2\sigma^2}}, \qquad \forall \, t \ge 0.
	\end{align}
\end{lemma}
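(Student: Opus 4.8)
The plan is to feed the hypothesis \eqref{eq:Herbst_bound} into the Herbst identity \eqref{eq:Herbst_identity} and then close the argument with the Chernoff bound recalled earlier. All of the analytic groundwork has already been laid in deriving \eqref{eq:Herbst_identity}, so the argument reduces to a one-line estimate followed by an optimization over $\lambda$.

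First I would start from \eqref{eq:Herbst_identity},
\begin{align*}
\psi(\lambda) = \lambda \int_0^\lambda \frac{D(P^{(tf)} \| P)}{t^2} \, \d t, \qquad \lambda \ge 0,
\end{align*}
observing that the integrand is nonnegative, since relative entropy is nonnegative, so the right-hand side is well defined. Plugging in the assumed bound $D(P^{(tf)} \| P) \le \tfrac12 \, t^2 \sigma^2$, the $t^2$ factors cancel and one obtains
\begin{align*}
\psi(\lambda) \le \lambda \int_0^\lambda \frac{\sigma^2}{2} \, \d t = \frac{\sigma^2 \lambda^2}{2}, \qquad \forall \, \lambda \ge 0,
\end{align*}
which is precisely the definition of $Z$ being $\sigma^2$-subgaussian.

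It then remains to invoke the Chernoff bound $\PP[Z \ge \E[Z] + t] \le e^{-\psi^\star(t)}$ with $\psi^\star(t) = \sup_{\lambda \ge 0} [\lambda t - \psi(\lambda)]$: substituting the quadratic bound on $\psi$ and choosing $\lambda = t/\sigma^2$ gives $\psi^\star(t) \ge t^2/(2\sigma^2)$, whence $\PP[f(X) \ge \E[f(X)] + t] \le e^{-t^2/2\sigma^2}$ for every $t \ge 0$.

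There is no genuine obstacle here; the effort was front-loaded into establishing the Herbst identity. The only points deserving a moment's attention are the nonnegativity of the integrand, so that the integral, a priori possibly infinite, is meaningful, and the fact that \eqref{eq:Herbst_bound} is assumed for every $\lambda \ge 0$ --- this is exactly what lets us bound the integral up to an arbitrary upper limit $\lambda$. Were the bound available only for small $\lambda$, one would obtain subgaussian-type control only in a bounded range of $\lambda$, and hence only a weaker, restricted tail estimate.
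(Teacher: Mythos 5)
Your argument is correct and follows exactly the route the paper intends: substitute the hypothesis \eqref{eq:Herbst_bound} into the Herbst identity \eqref{eq:Herbst_identity} to get $\psi(\lambda) \le \tfrac12\lambda^2\sigma^2$, then optimize the Chernoff bound at $\lambda = t/\sigma^2$. The side remarks about nonnegativity of the integrand and the need for \eqref{eq:Herbst_bound} on all of $[0,\lambda]$ are appropriate but do not change the argument.
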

\noindent In fact, it can be shown that the reverse implication holds as well, but with some loss in the constants \cite{Ramon_lectures}: if $Z = f(X)$ is $\sigma^2/4$-subgaussian, then
	\begin{align*}
		D(P^{(\lambda f)} \| P) \le \tfrac12 \, \lambda^2\sigma^2, \qquad \lambda \ge 0.
	\end{align*}
In other words, subgaussianity of $Z = f(X)$ is equivalent to $D(P^{(\lambda f)} \| P) = O(\lambda^2)$. It seems, therefore, that we have not really accomplished anything, apart from arriving at an equivalent characterization of subgaussianity. However, the relative entropy has one crucial property: it tensorizes. Recall that we are interested in the high-dimensional setting, where $X = (X_1,\ldots,X_n)$ is a tuple of $n$ independent random variables. Thus, $P = \cL(X)$ is a product distribution: $P_{X} = P_{X_1} \otimes \ldots \otimes P_{X_n}$. Using this fact together with the chain rule for relative entropy, we arrive at the following:
\begin{lemma}[Tensorization of the relative entropy] Let $P$ and $Q$ be two probability distributions of a random $n$-tuple $X = (X_1,\ldots,X_n)$, such that the coordinates of $X$ are independent under $P$. Then
	\begin{align}\label{eq:KL_tensorization}
		D(Q \| P) \le \sum^n_{i=1} D(Q_{X_i|\overbar{X}^i} \| P_{X_i} | Q_{\overbar{X}^i}).
	\end{align}
\end{lemma}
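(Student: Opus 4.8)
\noindent The plan is to reduce the statement to two standard ingredients: the chain rule for relative entropy, and the one structural hypothesis we are handed, namely that $P = P_{X_1}\otimes\cdots\otimes P_{X_n}$. If the right-hand side of \eqref{eq:KL_tensorization} is $+\infty$ there is nothing to prove, so I will assume it is finite; this forces $Q \ll P$ and ensures that all the conditional laws appearing below are well defined (regular conditional probabilities exist on the spaces we work with, and everything can be expressed via densities against a fixed product reference measure $\mu_1\otimes\cdots\otimes\mu_n$). The strategy is then: decompose $D(Q\|P)$ causally, use the product structure to kill the conditional copies of $P$, and finally compare the causal terms with the ``leave-one-out'' terms on the right-hand side, term by term.

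First I would apply the chain rule for relative entropy, writing, with $X^{i-1} = (X_1,\ldots,X_{i-1})$ and $X^0 = \emptyset$,
\begin{align*}
D(Q\|P) = \sum_{i=1}^n D\bigl(Q_{X_i|X^{i-1}} \,\big\|\, P_{X_i|X^{i-1}} \,\big|\, Q_{X^{i-1}}\bigr).
\end{align*}
Since $P$ is a product distribution, $P_{X_i|X^{i-1}} = P_{X_i}$ for every $i$, so this collapses to
\begin{align*}
D(Q\|P) = \sum_{i=1}^n D\bigl(Q_{X_i|X^{i-1}} \,\big\|\, P_{X_i} \,\big|\, Q_{X^{i-1}}\bigr).
\end{align*}

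The crux is the pointwise-in-$i$ comparison
\begin{align*}
D\bigl(Q_{X_i|X^{i-1}} \,\big\|\, P_{X_i} \,\big|\, Q_{X^{i-1}}\bigr) \;\le\; D\bigl(Q_{X_i|\overbar{X}^i} \,\big\|\, P_{X_i} \,\big|\, Q_{\overbar{X}^i}\bigr),
\end{align*}
which, summed over $i$, gives \eqref{eq:KL_tensorization}. To establish it I would write both sides as $Q$-expectations of the relevant conditional log-densities with respect to $P_{X_i}$; the $P_{X_i}$ contribution is common to both sides, and since $X^{i-1}$ is a measurable function of $\overbar{X}^i$, subtracting leaves
\begin{align*}
D\bigl(Q_{X_i|\overbar{X}^i} \,\big\|\, P_{X_i} \,\big|\, Q_{\overbar{X}^i}\bigr) - D\bigl(Q_{X_i|X^{i-1}} \,\big\|\, P_{X_i} \,\big|\, Q_{X^{i-1}}\bigr) = \E_{Q_{\overbar{X}^i}}\Bigl[ D\bigl(Q_{X_i|\overbar{X}^i} \,\big\|\, Q_{X_i|X^{i-1}}\bigr) \Bigr] \ge 0,
\end{align*}
the right-hand side being an average of relative entropies, hence nonnegative. (Equivalently: the difference equals the decrease in the entropy of $X_i$ under $Q$ caused by conditioning on the extra coordinates $(X_{i+1},\ldots,X_n)$ in addition to $X^{i-1}$, and conditioning never increases entropy.)

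I expect no genuine analytic difficulty here; the only thing to be careful about is bookkeeping — disposing of the $+\infty$ case up front, checking that the conditional relative entropies are well defined, and justifying the density manipulations in the subtraction step — all of which is routine once one fixes a product reference measure. Conceptually, the whole lemma is just the relative-entropy analogue of Han's inequality, specialized to the case in which the reference distribution $P$ factorizes, with the product structure being precisely what allows the conditional copies of $P$ to be replaced by the marginals $P_{X_i}$.
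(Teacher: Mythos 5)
Your proof is correct and takes the same route the paper indicates: the chain rule for relative entropy plus the product structure of $P$, which collapses the conditional reference laws $P_{X_i|X^{i-1}}$ to the marginals $P_{X_i}$. The term-by-term comparison you supply---that conditioning on the extra coordinates $(X_{i+1},\ldots,X_n)$ under $Q$ cannot decrease the conditional divergence against the fixed reference $P_{X_i}$, since the difference is an average of relative entropies---is exactly the step needed to pass from the causal chain-rule decomposition to the erasure-divergence form on the right-hand side of \eqref{eq:KL_tensorization}.
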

\noindent The quantity on the right-hand side of \eqref{eq:KL_tensorization} is the \textit{erasure divergence} between $Q$ and $P$ \cite{Verdu_Weissman_erasures}. We now particularize this general bound to our problem, where $Q$ is given by the tilted distribution $P^{(\lambda f)}$. In that case, using Bayes' rule and the fact that the $X_i$'s are independent, we can express the conditional distributions $P^{(\lambda f)}_{X_i|\overbar{X}^i}$ as follows: for each $\overbar{x}^i$,
\begin{align*}
	\d P^{(\lambda f)}_{X_i|\overbar{X}^i=\overbar{x}^i} = \frac{e^{\lambda f(x_1,\ldots,x_{i-1},\cdot,x_{i+1},\ldots,x_n)}}{\E\left[e^{\lambda f(x_1,\ldots,x_{i-1},X_i,x_{i+1},\ldots,x_n)}\right]} \; \d P_{X_i}.
\end{align*}
This looks formidable; nevertheless, it reveals that the conditional distribution $P^{(\lambda f)}_{X_i|\overbar{X}^i=\overbar{x}^i}$ is the exponential tilting of the marginal distribution $P_{X_i}$ with respect to the random variable $f_i(X_i) = f(x_1,\ldots,x_{i-1},X_i,x_{i+1},\ldots,x_n)$, which depends only on $X_i$ because $\overbar{x}^i$ is fixed. Thus, we arrive at the following bound:
\begin{align*}
	D(P^{(\lambda f)} \| P) \le \sum^n_{i=1} \widetilde{\E}\left[D(P^{(\lambda f_i)}_{X_i} \| P_{X_i})\right],
\end{align*}
where the expectation on the right-hand side is with respect to the tilted distribution.

We can now distill the entropy method into a series of steps:
\begin{enumerate}
	\item We wish to derive a subgaussian tail bound
	$$
	\PP\left[f(X^n) \ge \E[f(X^n)] + t \right] \le e^{-\frac{t^2}{2\sigma^2}}, \qquad t \ge 0,
	$$
	where $X_1,\ldots,X_n$ are independent random variables.
	\item Suppose that we can prove that there exist constants $c_1,\ldots,c_n \ge 0$, such that
	\begin{align}\label{eq:divide_and_conquer}
	D(P^{(\lambda f_i)}_{X_i} \| P_{X_i}) \le \tfrac12 \, \lambda^2 c^2_i, \qquad \forall \, i \in \{1, \ldots, n\}.
    \end{align}
	\item Then, by the tensorization lemma,
	$$
	D(P^{(\lambda f)} \| P) \le \tfrac12 \, \lambda^2 \sum^n_{i=1}c^2_i,
	$$
	and therefore, by the Herbst argument, $Z = f(X^n)$ is $\sigma^2$-subgaussian with $\sigma^2 = \sum^n_{i=1}c^2_i$.
\end{enumerate}
The main benefit of passing to the relative-entropy characterization of subgaussianity is that now, via tensorization, we have broken up a difficult $n$-dimensional problem into $n$ presumably easier $1$-dimensional problems, each of which boils down to analyzing the behavior of the function $f_i(X_i) \equiv f(x_1,\ldots,x_{i-1},X_i,x_{i+1},\ldots,x_n)$, where only the $i$th input coordinate is random, and the remaining ones are fixed at some arbitrary values.

Of course, the problem now reduces to showing that \eqref{eq:divide_and_conquer} holds. One route, which often yields tight constants, is via so-called \textit{logarithmic Sobolev inequalities}. In a nutshell, a logarithmic Sobolev inequality (or LSI, for short) ties together a probability distribution $P$, some function class $\cA$ that contains the function $f$ of interest, and an ``energy" functional $E : \cA \to \R$ with the property
$$
E(\alpha f) = \alpha E(f), \qquad \forall \alpha \ge 0, f \in \cA.
$$
With these ingredients in place, a log-Sobolev inequality takes the form
$$
D(P^{(f)} \| P) \le \tfrac12 \, c \, E^2(f), \qquad \forall f \in \cA.
$$
Now suppose that $E(f) \le L$. Then we readily get the bound
$$
D(P^{(\lambda f)} \| P) \le \tfrac12 \, c \, E^2(\lambda f ) = \tfrac12 \, \lambda^2 c \, E^2(f) \le \tfrac12 \, \lambda^2 cL^2,
$$
so $f(X)$, $X \sim P$, is $\sigma^2$-subgaussian with $\sigma^2 = cL^2$.

There is a vast literature on log-Sobolev inequalities, and an interested reader may consult our monograph for more details and additional references. Here we will give the two classic examples: the Bernoulli LSI and the Gaussian LSI, due to Gross \cite{Gross}.
\begin{theorem}[Bernoulli LSI] Let $X_1,\ldots,X_n$ be i.i.d.\ ${\rm Bern}(1/2)$ random variables. Then, for every function $f \colon \{0,1\}^n \to \R$, we have
	\begin{align}
		D(P^{(f)} \| P) \le \frac{1}{8} \frac{\E\left[|Df(X^n)|^2 e^{f(X^n)}\right]}{\E[e^{f(X^n)}]},
	\end{align}
	where $P = {\rm Bern}(1/2)^{\otimes n}$,
	$$
	Df(x^n) \deq \sqrt{\sum^n_{i=1} \left|f(x^n) - f(x^n \oplus e_i)\right|^2},
	$$
	and $x^n \oplus e_i$ is the XOR of $x^n$ with the bit string of all zeros, except for the $i$th bit. In other words, $x^n \oplus e_i$ is $x^n$ with the $i$th bit flipped.
\end{theorem}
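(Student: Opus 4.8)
The plan is to reduce the $n$-dimensional inequality to a single-coordinate statement by tensorization --- which has essentially already been carried out in the excerpt --- and then to settle that single-coordinate statement by one-variable calculus. I would start from the per-coordinate bound established above, which, applied with $\lambda = 1$ (that is, to $f$ itself) and recalling that $P = {\rm Bern}(1/2)^{\otimes n}$ is a product measure, reads
\[
D(P^{(f)} \| P) \le \sum_{i=1}^n \widetilde{\E}\left[ D\bigl(P^{(f_i)}_{X_i} \| P_{X_i}\bigr) \right],
\]
where $f_i(X_i) \deq f(x_1,\ldots,x_{i-1},X_i,x_{i+1},\ldots,x_n)$ and $\widetilde{\E}$ is expectation under $P^{(f)}$. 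Each inner term involves only a single ${\rm Bern}(1/2)$ variable, so the whole theorem follows once one establishes the one-dimensional inequality: for $X \sim {\rm Bern}(1/2)$ and every $g \colon \{0,1\} \to \R$, $D(P^{(g)} \| P) \le \tfrac18 (g(0) - g(1))^2$. Granting this, I would apply it inside the sum with $\overbar{x}^i$ frozen; since $(f_i(0) - f_i(1))^2 = |f(x^n) - f(x^n \oplus e_i)|^2$ (which holds whatever the value of $x_i$, because flipping $x_i$ merely swaps the two configurations and leaves the squared difference unchanged), summing over $i$ and using both $\sum_{i=1}^n |f(x^n) - f(x^n \oplus e_i)|^2 = |Df(x^n)|^2$ and $\widetilde{\E}[\,\cdot\,] = \E[\,\cdot\, e^{f(X^n)}] / \E[e^{f(X^n)}]$ recovers the asserted bound, with the constant $\tfrac18$.

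It then remains to prove the one-dimensional inequality. I would parametrize the tilted law by $p \deq P^{(g)}(\{0\}) = \frac{e^{g(0)}}{e^{g(0)} + e^{g(1)}} \in (0,1)$. A short computation gives $D(P^{(g)} \| P) = \log 2 - h(p)$, with $h$ the binary entropy function to base $e$, and $g(0) - g(1) = \log\tfrac{p}{1-p}$; hence the one-dimensional inequality is equivalent to the scalar inequality
\[
\log 2 - h(p) \le \tfrac18 \Bigl( \log\tfrac{p}{1-p} \Bigr)^2 , \qquad \forall\, p \in (0,1).
\]

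To prove this last inequality I would substitute $s \deq \log\tfrac{p}{1-p} \in \R$, so that $p = (1 + e^{-s})^{-1}$, and set $\varphi(s) \deq \tfrac18 s^2 - \log 2 + h(p)$. Then $\varphi(0) = 0$, and $\varphi$ is even because $s \mapsto -s$ corresponds to $p \mapsto 1-p$ while $h(1-p) = h(p)$. Differentiating and using the elementary identities $h'(p) = \log\tfrac{1-p}{p} = -s$ and $p'(s) = p(1-p)$ (the derivative of the logistic function), the derivative collapses to $\varphi'(s) = \tfrac14 s - s\, p(1-p) = s\bigl(\tfrac14 - p(1-p)\bigr)$. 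Since $p(1-p) \le \tfrac14$ for every $p \in (0,1)$, we get $\varphi'(s) \ge 0$ for $s \ge 0$, so $\varphi(s) \ge \varphi(0) = 0$ there, and by evenness $\varphi \ge 0$ on all of $\R$, which is exactly the scalar inequality. The tensorization and parametrization steps are essentially bookkeeping, so the only substantive point is this scalar inequality, and its proof turns on the fortunate cancellation $\varphi'(s) = s\bigl(\tfrac14 - p(1-p)\bigr)$ in the variable $s = \log\tfrac{p}{1-p}$, after which $p(1-p) \le \tfrac14$ finishes everything; the same computation shows the constant $\tfrac18$ is sharp, as the two sides of the scalar inequality share the same second-order Taylor expansion at $p = \tfrac12$.
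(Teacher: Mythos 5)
Your proof is correct. It fills in exactly the proof the paper sketches but omits---reduce to $n=1$ via tensorization of relative entropy, then settle the one-coordinate inequality $D(P^{(g)}\|P) \le \tfrac18\,(g(0)-g(1))^2$ by calculus---and your substitution $s = \log\frac{p}{1-p}$ makes the calculus clean: the cancellation $\varphi'(s) = s\bigl(\tfrac14 - p(1-p)\bigr)$ both proves the scalar bound and exhibits sharpness of the constant $\tfrac18$. The paper later gives an alternative derivation of the same one-coordinate bound via Maurer's method: the identity $D(P^{(\lambda g)}\|P) = \int_0^\lambda\!\int_t^\lambda \Var^{(sg)}[g(X)]\,\d s\,\d t$ combined with the variance bound $\Var^{(sg)}[g(X)] \le \tfrac14|g(0)-g(1)|^2$ of Lemma~1, which integrates to $\tfrac18\lambda^2|g(0)-g(1)|^2$. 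The two routes are not as different as they look: your key inequality $p(1-p) \le \tfrac14$ is exactly that variance bound specialized to the tilted Bernoulli law, so both proofs turn on the same fact; Maurer's method is somewhat softer and generalizes more readily to other base measures, while your direct comparison of $\log 2 - h(p)$ against $\tfrac18 s^2$ is more elementary and self-contained.
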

\noindent The proof, which we omit, is to first establish the $n=1$ case via a straightforward if tedious exercise in calculus, and then to extend to an arbitrary $n$ by tensorization. Note that the mapping $f \mapsto Df$ has the desired scaling property: $D(\alpha f) = \alpha D(f)$ for all $\alpha \ge 0$.

\begin{theorem}[Gaussian LSI] Let $X_1,\ldots,X_n$ be i.i.d. $N(0,1)$ random variables. Then, for an arbitrary smooth function
$f \colon \R^n \to \R$,
	\begin{align}
		D(P^{(f)}\|P) \le \frac{1}{2} \frac{\E\left[\| \nabla f(X^n) \|^2_2 e^{f(X^n)}\right]}{\E[e^{f(X^n)}]}.
	\end{align}
\end{theorem}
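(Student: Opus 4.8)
The plan is to derive the Gaussian LSI from the Bernoulli LSI already stated, which is Gross's original route. Two successive reductions do the job: \emph{tensorization} brings the $n$-dimensional statement down to the one-dimensional case, and the \emph{central limit theorem} brings the one-dimensional Gaussian LSI down to the Bernoulli LSI applied to a normalized sum of $\pm1$ bits. Throughout, write $\gamma = N(0,1)$.

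First I would reduce to $n=1$. Assume the one-dimensional claim, $D(\gamma^{(g)}\|\gamma)\le\tfrac12\,\E_\gamma[|g'|^2 e^{g}]/\E_\gamma[e^{g}]$ for every smooth $g\colon\R\to\R$. Apply the tensorization bound \eqref{eq:KL_tensorization} with $P = \gamma^{\otimes n}$ and $Q = P^{(f)}$. For fixed $\overbar x^i$ the map $f_i\colon x_i\mapsto f(x_1,\dots,x_{i-1},x_i,x_{i+1},\dots,x_n)$ is a function of a single $\gamma$-distributed variable, so (using $P_{X_i}=\gamma$ and $e^{f_i}=e^{f}$ as functions of $x_i$) the one-dimensional LSI bounds the $i$-th conditional divergence by $\tfrac12\,\E_{X_i}[|\partial_i f|^2 e^{f}]/\E_{X_i}[e^{f}]$, where $\E_{X_i}$ integrates out $X_i$ with $\overbar x^i$ frozen. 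Now average over $\overbar X^i$ under $P^{(f)}_{\overbar X^i}$, whose density relative to $P_{\overbar X^i}$ is proportional to $\E_{X_i}[e^{f}]$: the denominators cancel and the $i$-th term collapses to $\tfrac12\,\E_P[|\partial_i f|^2 e^{f}]/\E_P[e^{f}]$. Summing over $i$ and using $\sum_i |\partial_i f|^2 = \|\nabla f\|_2^2$ gives the $n$-dimensional statement, so from here on only $n=1$ is needed.

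Next I would obtain the one-dimensional Gaussian LSI as a limit of the Bernoulli LSI. Fix first a bounded $f\in C^2(\R)$ with bounded $f'$ and $f''$. Let $U_1,\dots,U_N$ be i.i.d.\ uniform on $\{-1,+1\}$ (equivalent to ${\rm Bern}(1/2)$ via $u\mapsto 2u-1$, under which the bit-flip of the Bernoulli LSI becomes a sign-flip), set $S_N = N^{-1/2}\sum_{k=1}^N U_k$, and let $g_N(u^N) = f(S_N(u^N))$. Negating the $i$-th coordinate of $u^N$ replaces $S_N$ by $S_N - 2U_i/\sqrt N$, so a second-order Taylor expansion of $f$ shows that the $i$-th discrete partial difference of $g_N$ equals $\tfrac{2U_i}{\sqrt N}f'(S_N) + O(1/N)$ uniformly in $u^N$ (here the bound on $f''$ enters), hence, using $U_i^2=1$, $|Dg_N(u^N)|^2 = 4|f'(S_N)|^2 + O(N^{-1/2})$, again uniformly. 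Feeding $g_N$ into the Bernoulli LSI gives $D(P^{(g_N)}\|P)\le \tfrac18\,\E[|Dg_N|^2 e^{g_N}]/\E[e^{g_N}]$ with $P$ the uniform law on $\{-1,1\}^N$. Since $g_N$ and $|Dg_N|^2$ are bounded continuous functions of the single real variable $S_N$ and $S_N\Rightarrow\gamma$ by the central limit theorem, both sides converge: the left side to $D(\gamma^{(f)}\|\gamma) = \E_\gamma[f e^{f}]/\E_\gamma[e^{f}] - \log\E_\gamma[e^{f}]$, the right side to $\tfrac18\cdot 4\,\E_\gamma[|f'|^2 e^{f}]/\E_\gamma[e^{f}] = \tfrac12\,\E_\gamma[|f'|^2 e^{f}]/\E_\gamma[e^{f}]$, which is exactly the one-dimensional Gaussian LSI. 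A general smooth $f$ is then handled by truncating it to a bounded $C^2$ function with bounded derivatives agreeing with $f$ on $[-M,M]$, applying the above, and letting $M\to\infty$ by monotone/dominated convergence on both sides.

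The main obstacle is the passage to the limit in the third step: one must (i) control the Taylor remainders uniformly in $u^N$ so that the $O(N^{-1/2})$ error in $|Dg_N|^2$ really disappears after applying $\E[\,\cdot\,e^{g_N}]/\E[e^{g_N}]$, and (ii) justify interchanging $\lim_{N\to\infty}$ with the expectation, which is clean precisely because we first reduced to bounded $f$ with bounded derivatives, so that $g_N$ and $|Dg_N|^2 e^{g_N}$ are uniformly bounded and the weak convergence $S_N\Rightarrow\gamma$ suffices; the subsequent $M\to\infty$ limit needs only a routine uniform-integrability check. (An alternative that avoids the CLT entirely is the Ornstein--Uhlenbeck semigroup proof: differentiate the entropy of $P^{(g)}$ transported along the OU flow and integrate by parts against $\gamma$ using the commutation relation for the OU generator; but the CLT argument dovetails with the Bernoulli LSI already in hand.)
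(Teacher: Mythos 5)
Your proposal is correct and follows exactly the route the paper attributes to Gross: tensorize the relative entropy (via \eqref{eq:KL_tensorization}) to reduce to $n=1$, then obtain the one-dimensional Gaussian LSI from the Bernoulli LSI by applying it to $f$ of a standardized $\pm1$ sum and passing to the Gaussian limit by the CLT. The paper gives only a one-sentence sketch of this argument, which you have filled in faithfully, including the tensorization step for general $n$ and the uniform Taylor-remainder control needed to justify the limit.
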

\noindent Note that the mapping $f \mapsto \| \nabla f \|_2$ has the scaling property: $\| \nabla (\alpha f) \|_2 = \alpha \| \nabla f \|_2$ for all $\alpha \ge 0$. By now, there are at least fifteen different ways in the literature for proving the Gaussian LSI. The original proof by Gross was to apply the Bernoulli LSI to the function
$$
f\left(\frac{X_1 + \ldots + X_n - n/2}{\sqrt{n/4}}\right), \qquad X_i \stackrel{{\rm i.i.d.}}{\sim} {\rm Bern}(1/2),
$$
and then pass to the Gaussian limit by appealing to the Central Limit Theorem.

The Gaussian LSI can be used to give a short proof of the following concentration inequality for Lipschitz functions of Gaussians, which was originally obtained by Tsirelson, Ibragimov, and Sudakov \cite{Tsirelson} using different methods:
\begin{theorem}[Tsirelson--Ibragimov--Sudakov] Let $X_1,\ldots,X_n$ be i.i.d.\ $N(0,1)$ random variables, and let $f \colon \R^n \to \R$ be a function which is $L$-Lipschitz:
	$$
	\left|f(x^n)-f(y^n)\right| \le L \left\| x^n - y^n \right\|_2.
	$$
Then, $f(X^n)$ is $L^2$-subgaussian, which yields
	\begin{align}\label{eq:Gauss_Lip}
		\PP\left[f(X^n) \ge \E[f(X^n)] + t \right] \le e^{-\frac{t^2}{2L^2}}
	\end{align}
for all $t > 0$.
\end{theorem}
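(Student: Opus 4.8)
The plan is to apply the Gaussian LSI to the one-parameter family $\{\lambda f : \lambda \ge 0\}$ and then invoke the Herbst argument. First I would dispose of the case where $f$ is smooth (in addition to being $L$-Lipschitz). Smoothness together with the Lipschitz bound forces $\|\nabla f(x^n)\|_2 \le L$ at every point $x^n \in \R^n$, since the gradient is a limit of difference quotients, each bounded by $L$. Hence, for every fixed $\lambda \ge 0$, the function $g = \lambda f$ is smooth with $\|\nabla g(x^n)\|_2 = \lambda \|\nabla f(x^n)\|_2 \le \lambda L$, so the Gaussian LSI applied to $g$ gives
\begin{align*}
D\bigl(P^{(\lambda f)}\,\|\,P\bigr) \;\le\; \frac12 \, \frac{\E\bigl[\|\nabla(\lambda f)(X^n)\|_2^2 \, e^{\lambda f(X^n)}\bigr]}{\E\bigl[e^{\lambda f(X^n)}\bigr]} \;\le\; \frac12 \, \lambda^2 L^2 .
\end{align*}
This is precisely the hypothesis \eqref{eq:Herbst_bound} of the Herbst argument with $\sigma^2 = L^2$, and so $f(X^n)$ is $L^2$-subgaussian and \eqref{eq:Gauss_Lip} follows at once.

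Second, I would remove the smoothness assumption by mollification. Given a general $L$-Lipschitz $f$, set $f_\eps = f * \varphi_\eps$, where $\varphi_\eps$ is a smooth probability density of width $\eps$ (e.g.\ a rescaled Gaussian). Then $f_\eps$ is smooth; it remains $L$-Lipschitz, since averaging over translates does not increase the Lipschitz constant; and $f_\eps(x^n) \to f(x^n)$ pointwise as $\eps \downarrow 0$. Because an $L$-Lipschitz function has at most linear growth, $f(X^n)$ and every $f_\eps(X^n)$ are integrable with respect to the standard Gaussian, $\E[f_\eps(X^n)] \to \E[f(X^n)]$, and also $\E[e^{\lambda f_\eps(X^n)}] \to \E[e^{\lambda f(X^n)}]$ (all by dominated convergence, using a Gaussian-integrable envelope). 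Applying the smooth case to each $f_\eps$ and letting $\eps \downarrow 0$ transfers the subgaussian moment bound, hence \eqref{eq:Gauss_Lip}, to $f$ itself.

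The step I expect to be the main obstacle is the approximation argument rather than the LSI-to-subgaussian implication, which is essentially immediate: one must check that mollification preserves the Lipschitz constant exactly (so that no $o(1)$ correction creeps into $\sigma^2$), and that the passage to the limit in the moment-generating function or in the tail probability is legitimate. An alternative that sidesteps mollification altogether is to use the Herbst \emph{identity} \eqref{eq:Herbst_identity}: for any Lipschitz $f$ the function $\psi(\lambda)$ is finite and smooth in $\lambda$ by dominated convergence, so it suffices to bound $D(P^{(tf)}\|P)$ for each $t$, and this can be done by approximating $f$ from inside the relative entropy by smooth Lipschitz functions and applying the Gaussian LSI along the way. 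Either route works; the mollification version gives the cleanest write-up.
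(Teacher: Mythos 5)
Your proposal is correct and takes the same route as the paper's own proof: apply the Gaussian LSI to $\lambda f$ using the pointwise bound $\|\nabla f\|_2 \le L$, invoke the Herbst argument to conclude $L^2$-subgaussianity, and handle general Lipschitz $f$ by a standard smoothing argument. The only difference is that the paper compresses the approximation step into the phrase ``by a standard approximation argument,'' whereas you spell out the mollification and the dominated-convergence passage to the limit.
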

\begin{proof} By a standard approximation argument, we may assume that $f$ is differentiable. Since it is also $L$-Lipschitz, $\| \nabla f \|^2_2 \le L^2$ everywhere. Substituting this bound into the Gaussian LSI for $\lambda f$, we obtain
	$$
	D(P^{(\lambda f)} \| f) \le \tfrac12 \, \lambda^2 L^2.
	$$
By the Herbst argument, $Z = f(X^n)$, $X^n \sim N(0,I_n)$, is $L^2$-subgaussian, and we are done.
\end{proof}

\noindent This result is remarkable in two ways: It only assumes Lipschitz continuity of $f$, and gives \textit{dimension-free} concentration (i.e., the exponent in \eqref{eq:Gauss_Lip} does not depend on $n$).

Deriving log-Sobolev inequalities, especially with tight constants, is a subtle art. A commonly used method is to realize $P$ as an invariant distribution of some continuous-time reversible Markov process and to extract a suitable energy functional $E$ from the structure of the infinitesimal generator of the process. In many cases, however, it is possible to derive a log-Sobolev inequality via tensorization and a nice and simple variance-based representation of the relative entropy due to A.~Maurer~\cite{Maurer_thermo}:

\begin{theorem}[Maurer]
\label{thm:Maurer_thermo}
Let $X$ be a random variable with law $P$. Then, for every real-valued function $f$ and all $\lambda \ge 0$
$$
D(P^{(\lambda f)} \| P) = \int^\lambda_0 \int^\lambda_t \Var^{(sf)}[f(X)] \d s\, \d t,
$$
where $\Var^{(sf)}[f(X)]$ is the variance of $f(X)$ under the tilted distribution $P^{(s f)}$.
\end{theorem}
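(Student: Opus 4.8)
The plan is to express both sides in terms of the logarithmic moment-generating function $\psi(\lambda) \deq \log \E_P\bigl[e^{\lambda f(X)}\bigr]$ (equivalently its centered version, as in the Herbst argument — a constant shift in $f$ changes $\psi$ only by a linear term and is immaterial here), and then to reduce the claimed formula to a one-line calculus identity.

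\emph{Step 1: rewrite the left-hand side.} Recall from the derivation preceding \eqref{eq:Herbst_identity} that
$$
D(P^{(\lambda f)} \| P) = \E^{(\lambda f)}\bigl[\lambda f(X) - \psi(\lambda)\bigr] = \lambda \psi'(\lambda) - \psi(\lambda),
$$
and note that $\psi(0) = \log \E_P[1] = 0$.

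\emph{Step 2: identify the integrand.} Differentiating $\psi$ under the expectation sign — which is legitimate whenever $\E_P\bigl[e^{sf(X)}\bigr]$ is finite on an open interval containing $[0,\lambda]$, automatic when $f$ is bounded — yields
$$
\psi'(s) = \frac{\E_P\bigl[f(X) e^{sf(X)}\bigr]}{\E_P\bigl[e^{sf(X)}\bigr]} = \E^{(sf)}[f(X)], \qquad \psi''(s) = \E^{(sf)}\bigl[f(X)^2\bigr] - \bigl(\E^{(sf)}[f(X)]\bigr)^2 = \Var^{(sf)}[f(X)].
$$
Hence the asserted identity is equivalent to the purely analytic statement $\int_0^\lambda \int_t^\lambda \psi''(s)\,\d s\,\d t = \lambda\psi'(\lambda) - \psi(\lambda)$.

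\emph{Step 3: evaluate the double integral.} The inner integral is $\int_t^\lambda \psi''(s)\,\d s = \psi'(\lambda) - \psi'(t)$, so
$$
\int_0^\lambda \int_t^\lambda \psi''(s)\,\d s\,\d t = \int_0^\lambda \bigl(\psi'(\lambda) - \psi'(t)\bigr)\,\d t = \lambda\psi'(\lambda) - \bigl(\psi(\lambda) - \psi(0)\bigr) = \lambda\psi'(\lambda) - \psi(\lambda),
$$
using $\psi(0) = 0$. (Equivalently, swap the order of integration by Fubini to get $\int_0^\lambda s\,\psi''(s)\,\d s$ and integrate by parts.) Combining Steps 1 and 3 gives the theorem. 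The only genuinely delicate point is the regularity needed in Step 2: justifying two differentiations under the expectation and the Fubini swap requires $\E_P\bigl[e^{sf(X)}\bigr]$ to be finite (indeed analytic) on a neighborhood of $[0,\lambda]$; for the bounded $f$ relevant to the log-Sobolev applications this is immediate, and in general one simply restricts $\lambda$ to the interior of the domain of $\psi$. Everything else is elementary.
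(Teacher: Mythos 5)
Your proof is correct and follows essentially the same route as the paper's: both rest on the identity $D(P^{(\lambda f)} \| P) = \lambda \psi'(\lambda) - \psi(\lambda)$, the observation that $\psi''(s) = \Var^{(sf)}[f(X)]$, and two applications of the fundamental theorem of calculus, with you simply running the chain of equalities from the double integral back to the relative entropy rather than the other way around. The added remarks on differentiating under the expectation are a harmless (and reasonable) supplement that the paper omits.
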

\begin{proof} As before, let $\psi(\lambda) = \log \E[e^{\lambda(f(X))-\E[f(X)]}]$ be the logarithmic moment-generating function of $f(X)$. Then
	\begin{align*}
		D(P^{(\lambda f)} \| P) &= \lambda \psi'(\lambda) - \psi(\lambda) \\
		&= \int^\lambda_0 \left[\psi'(\lambda)-\psi'(t)\right] \d t \\
		&= \int^\lambda_0 \int^\lambda_t \psi''(s) \d s\, \d t,
	\end{align*}
	where we have used the fact that $\psi(0) = \psi'(0) = 0$ and the fundamental theorem of calculus. Recalling that $\psi''(s) = \Var^{(sf)}[f(X)]$, we are done.
\end{proof}
The following result is a direct consequence of Theorem~\ref{thm:Maurer_thermo}:

\begin{theorem}\label{thm:Maurer_method} Let $\cA$ be a class of functions of $X$, and suppose that there is a mapping $\Gamma : \cA \to \R$, such that:
\begin{enumerate}
	\item For all $f \in \cA$ and $\alpha \ge 0$, $\Gamma(\alpha f) = \alpha \Gamma(f)$.
	\item There exists a constant $c > 0$, such that
	$$
	\Var^{(\lambda f)}[f(X)] \le c |\Gamma(f)|^2, \qquad \forall f \in \cA,\, \lambda \ge 0.
	$$
\end{enumerate}
Then
\begin{align*}
	D(P^{(\lambda f)} \| P) \le \tfrac12 \, \lambda^2 c \, |\Gamma(f)|^2, \qquad \forall f \in \cA, \lambda \ge 0.
\end{align*}
\end{theorem}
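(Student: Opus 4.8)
The plan is to obtain this as an immediate corollary of Maurer's identity (Theorem~\ref{thm:Maurer_thermo}), with essentially no work beyond an elementary integral. First I would fix $f \in \cA$ and $\lambda \ge 0$ and write down Maurer's representation verbatim,
\begin{align*}
D(P^{(\lambda f)} \| P) = \int_0^\lambda \int_t^\lambda \Var^{(sf)}[f(X)] \, \d s \, \d t.
\end{align*}
The integrand is the variance of $f(X)$ under the exponential tilting $P^{(sf)}$ with tilting parameter $s$, so hypothesis~(2) of the theorem --- read with $\lambda$ replaced by $s$ --- applies and gives $\Var^{(sf)}[f(X)] \le c\,|\Gamma(f)|^2$ for every $s \in [0,\lambda]$. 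The key point to be careful about is precisely this: the variance bound must hold for the whole range of tilting parameters appearing in Maurer's double integral, not merely at $s = \lambda$; but this is exactly how condition~(2) is quantified, so there is nothing to check.

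Second, I would pull the resulting $s$-independent bound out of both integrals and evaluate the remaining elementary double integral,
\begin{align*}
D(P^{(\lambda f)} \| P) \le c\,|\Gamma(f)|^2 \int_0^\lambda \int_t^\lambda \d s \, \d t = c\,|\Gamma(f)|^2 \int_0^\lambda (\lambda - t)\, \d t = \tfrac12\, \lambda^2 c \, |\Gamma(f)|^2,
\end{align*}
which is exactly the claimed inequality.

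Since every step is forced, there is no genuine obstacle here; the statement is really a packaging of Theorem~\ref{thm:Maurer_thermo}. I would only add a remark that hypothesis~(1), the positive homogeneity $\Gamma(\alpha f) = \alpha \Gamma(f)$, is what makes the bound scale correctly in $\lambda$ and is the property one feeds into the Herbst argument (with energy functional $E = |\Gamma|$); combining this theorem with the Herbst argument then yields that $f(X)$ is $\sigma^2$-subgaussian with $\sigma^2 = c\,|\Gamma(f)|^2$, recovering the log-Sobolev-to-concentration pipeline described above.
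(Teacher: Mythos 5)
Your proof is correct and is exactly the intended argument: the paper presents this theorem as a ``direct consequence'' of Theorem~\ref{thm:Maurer_thermo}, and your two steps (uniform bound on the tilted variance over $s\in[0,\lambda]$, then the elementary double integral evaluating to $\lambda^2/2$) are precisely how that consequence is realized. Your side remark that hypothesis~(1) plays no role in this particular inequality but is what lets the conclusion feed into the Herbst argument is also accurate and a useful observation.
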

\noindent To illustrate Maurer's method, let's use it to derive the Bernoulli LSI. It suffices to prove the $n=1$ case, and then to scale up to an arbitrary $n$ by tensorization. Thus, let $P = {\rm Bern}(1/2)$, and for every function $f \colon \{0,1\} \to \R$ define $\Gamma (f) \deq |f(0)-f(1)|$. By Lemma~\ref{lm:var_bound},
\begin{align*}
	\Var^{(\lambda f)}[f(X)] \le \tfrac14 \, |f(0)-f(1)|^2 = \tfrac14 \, |\Gamma(f)|^2.
\end{align*}
Thus, the conditions of Theorem~\ref{thm:Maurer_method} are satisfied with $c = 1/4$, and we get precisely the Bernoulli LSI. One can also use Maurer's method to prove McDiarmid's inequality (see Theorem~\ref{thm: McDiarmid bounded_differences 89}).

\section{Transportation-cost inequalities}

At this point, we notice a common theme running through the above examples of concentration:
\begin{itemize}
	\item Let $f \colon \R^n \to \R$ be $1$-Lipschitz with respect to the Euclidean norm $\| \cdot \|_2$, and let $X_1,\ldots,X_n$ be i.i.d.\ $N(0,1)$ random variables. Then, for every $t \ge 0$,
	$$
	\PP[f(X^n) \ge \E[f(X^n)] + t] \le e^{-t^2/2}.
	$$
	\item Let $\sX$ be an arbitrary space, and consider a function $f \colon \sX^n \to \R$, which is $1$-Lipschitz with respect to the weighted Hamming metric
	$$
	d_{{\bf c}}(x^n,y^n) \deq \sum^n_{i=1} c_i {\bf 1}_{\{x_i \neq y_i\}},
	$$
	where $c_1,\ldots,c_n \ge 0$ are some fixed constants. It is easy to see that such a Lipschitz property is equivalent to the bounded difference property \eqref{eq:bdd_diff}, and in that case McDiarmid's inequality tells us that
	$$
	\PP[f(X^n) \ge \E[f(X^n)] + t] \le e^{-2t^2/\sum^n_{i=1}c^2_i}
	$$
	for every tuple $X_1,\ldots,X_n$ of independent $\sX$-valued random variables.
\end{itemize}
Thus, metric spaces and Lipschitz functions seem to be a natural setting to study concentration. To make this statement more precise, let $(\sX,d)$ be a metric space. We say that a function $f \colon \sX \to \R$ is $L$-Lipschitz (with respect to $d$) if
$$
|f(x)-f(y)| \le L d(x,y), \qquad \forall x,y \in \sX.
$$
Denoting by ${\rm Lip}_L(\sX,d)$ the class of all $L$-Lipschitz functions, we can pose the following question: What conditions does a probability distribution $P$ on $\sX$ have to satisfy, so that $f(X)$ with $X \sim P$ is $\sigma^2$-subgaussian for every $f \in {\rm Lip}_1(\sX,d)$?

Through the pioneering work of Katalin Marton \cite{Marton_blowup,Marton_contracting_MCs,Marton_dbar,Marton_Euclidean,Marton_Euclidean_correction,Marton_ISIT2013}, the answer to the above question has deep links to information theory via the notion of so-called \textit{transportation-cost inequalities} \cite{Villani_TOT}. In order to introduce them, we first need some definitions. A \textit{coupling} of two probability distributions $P$ and $Q$ on $\sX$ is a probability distribution $\pi$ on the Cartesian product $\sX \times \sX$, such that for $(X,Y) \sim \pi$ we have $X \sim P$ and $Y \sim Q$. Let $\Pi(P,Q)$ denote the set of all couplings of $P$ and $Q$. For $p \ge 1$, the \textit{$L^p$ Wasserstein distance} between $P$ and $Q$ is defined as
	\begin{align*}
		W_p(P,Q) \deq \inf_{\pi \in \Pi(P,Q)} \left(\E_\pi[d^p(X,Y)]\right)^{1/p}.
	\end{align*}
The name ``transportation cost'' comes from the following interpretation: Let $P$ (resp., $Q$) represent the initial (resp., desired) distribution of some matter (say, sand) in space, such that the total mass in both cases is normalized to one. Thus, both $P$ and $Q$ correspond to sand piles of some given shapes. The objective is to rearrange the initial sand pile with shape $P$ into one with shape $Q$ with minimum cost, where the cost of transporting a grain of sand from location $x$ to location $y$ is given by $d^p(x,y)$. If we allow randomized transportation policies, i.e., those that associate with each location $x$ in the initial sand pile a conditional probability distribution $\pi(\d y|x)$ for its destination in the final sand pile, then the minimum transportation cost is given by $W_p(P,Q)$. We say that $P$ satisfies an $L^p$ \textit{transportation-cost inequality} with constant $c$, or $T_p(c)$ for short, if
\begin{align*}
	W_p(P,Q) \le \sqrt{2 c D(Q \| P)}, \qquad \forall Q.
\end{align*}
The well-known Pinsker's inequality is, in fact, a transportation-cost inequality: If we take $\sX$ to be an arbitrary space and equip it with the metric $d(x,y) = {\bf 1}_{\{x \neq y\}}$, then the $L^1$ Wasserstein distance $W_1(P,Q)$ is simply the total variation distance
$$
\| P - Q \|_{\rm TV} = \sup_{A} |P(A)-Q(A)|,
$$
and Pinsker's inequality $$\| P - Q \|_{\rm TV} \le \sqrt{\tfrac12 \, D(Q \| P)}$$ (in nats) is then a $T_1(\tfrac14)$ inequality, which is satisfied by all probability measures $P, Q$ where $Q \ll P$ (i.e., $Q$ is absolutely continuous with respect to $P$). Various distribution-dependent refinements of Pinsker's inequality where the constant is optimized for a fixed $P$ while varying only $Q$ \cite{Ordentlich_Weinberger_Pinsker,Berend_Harremoes_Kontorovich} can be interpreted in the same vein as well. Another well-known transportation-cost (TC) inequality is due to Talagrand \cite{Talagrand_Gaussian_T2}: Let $\sX$ be the Euclidean space $\R^n$, equipped with the Euclidean metric $d(x,y) = \| x-y \|_2$. Then $P = N(0,I_n)$ satisfies the $T_2(1)$ inequality: $W_2(P,Q) \le \sqrt{2 D(Q \| P)}$. The remarkable thing here is that the constant is independent of the dimension $n$.

With these preliminaries out of the way, we can now state the theorem, due to Bobkov and G\"otze \cite{Bobkov_Gotze_expint}, which provides an answer to the question posed above:
\begin{theorem}[Bobkov--G\"otze] Let $X$ be a random variable taking values in a metric space $(\sX,d)$ according to a probability distribution $P$. Then, the following are equivalent:
	\begin{enumerate}
		\item $f(X)$ is $\sigma^2$-subgaussian for every $f \in {\rm Lip}_1(\sX,d)$.
		\item $P$ satisfies $T_1(\sigma^2)$, i.e.,
		$$
		W_1(P,Q) \le \sqrt{2\sigma^2 D(Q \| P)}
		$$
		for all $Q$.
	\end{enumerate}
\end{theorem}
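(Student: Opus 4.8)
The plan is to prove both implications by connecting the subgaussianity condition to the relative entropy via the Herbst-type identity \eqref{eq:Herbst_identity}, and then to link relative entropy to $W_1$ through the dual (Kantorovich--Rubinstein) formulation of the Wasserstein distance. Recall that $W_1(P,Q) = \sup_{f \in {\rm Lip}_1(\sX,d)} \left( \E_Q[f] - \E_P[f] \right)$; this duality is the one structural fact I would invoke without proof, as it is classical and not developed elsewhere in the excerpt.

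For the direction $(2) \Rightarrow (1)$, suppose $P$ satisfies $T_1(\sigma^2)$. Fix $f \in {\rm Lip}_1(\sX,d)$; without loss of generality assume $\E_P[f] = 0$. For $\lambda \ge 0$ consider the tilted distribution $P^{(\lambda f)}$ defined by $\d P^{(\lambda f)} = e^{\lambda f} \d P / \E_P[e^{\lambda f}]$. By the Kantorovich--Rubinstein duality applied with the specific test function $f$, and using that $f$ is $1$-Lipschitz,
\begin{align*}
\E_{P^{(\lambda f)}}[f] - \E_P[f] = \E_{P^{(\lambda f)}}[f] \le W_1(P, P^{(\lambda f)}) \le \sqrt{2\sigma^2 D(P^{(\lambda f)} \| P)}.
\end{align*}
Now I would recognize the left-hand side in terms of $\psi(\lambda) = \log \E_P[e^{\lambda f}]$: a direct computation (as in the Herbst argument in the excerpt) gives $\E_{P^{(\lambda f)}}[f] = \psi'(\lambda)$ and $D(P^{(\lambda f)} \| P) = \lambda \psi'(\lambda) - \psi(\lambda)$. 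Substituting, we get $\psi'(\lambda) \le \sqrt{2\sigma^2 (\lambda \psi'(\lambda) - \psi(\lambda))}$, i.e. $\big(\psi'(\lambda)\big)^2 \le 2\sigma^2 \big(\lambda \psi'(\lambda) - \psi(\lambda)\big)$. This is a differential inequality for $\psi$; writing $G(\lambda) \deq \psi(\lambda)/\lambda$ (so $G(0^+) = 0$ by l'Hopital, and $\lambda^2 G'(\lambda) = \lambda\psi'(\lambda) - \psi(\lambda) = D(P^{(\lambda f)}\|P) \ge 0$), the inequality becomes $\psi'(\lambda)^2 \le 2\sigma^2 \lambda^2 G'(\lambda)$. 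The standard trick is to observe that $\psi'(\lambda) = G(\lambda) + \lambda G'(\lambda)$ and feed this back in; one obtains a bound that integrates to $G(\lambda) \le \tfrac12 \sigma^2 \lambda$, i.e. $\psi(\lambda) \le \tfrac12 \sigma^2 \lambda^2$, which is exactly $\sigma^2$-subgaussianity. The Chernoff bound then gives the tail estimate. (Alternatively, and perhaps more cleanly, one can run the Herbst argument directly: from $D(P^{(\lambda f)}\|P) \le \tfrac12 \lambda^2 \sigma^2$ one would be done immediately via the Herbst Lemma — but the TC inequality only gives $\psi'(\lambda) \le \sqrt{2\sigma^2 D}$, so one really does have to solve the ODE. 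I expect this ODE manipulation to be the main technical obstacle, since it requires the right substitution to close the loop.)

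For the converse $(1) \Rightarrow (2)$, assume $f(X)$ is $\sigma^2$-subgaussian for every $f \in {\rm Lip}_1(\sX,d)$. Fix an arbitrary $Q$ with $Q \ll P$; I want to show $W_1(P,Q) \le \sqrt{2\sigma^2 D(Q\|P)}$. Using Kantorovich--Rubinstein duality, it suffices to show $\E_Q[f] - \E_P[f] \le \sqrt{2\sigma^2 D(Q\|P)}$ for every $f \in {\rm Lip}_1(\sX,d)$. Fix such an $f$ with $\E_P[f] = 0$. Here the tool is the Donsker--Varadhan variational formula for relative entropy: for every $\lambda \ge 0$,
\begin{align*}
\lambda \E_Q[f] \le D(Q \| P) + \log \E_P[e^{\lambda f}] = D(Q\|P) + \psi(\lambda) \le D(Q\|P) + \tfrac12 \sigma^2 \lambda^2,
\end{align*}
where the last step uses subgaussianity of $f(X)$ under $P$. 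Dividing by $\lambda$ and optimizing over $\lambda > 0$ (the minimum of $D(Q\|P)/\lambda + \tfrac12\sigma^2\lambda$ is attained at $\lambda = \sqrt{2D(Q\|P)/\sigma^2}$) yields $\E_Q[f] \le \sqrt{2\sigma^2 D(Q\|P)}$. Taking the supremum over $f$ completes this direction. The only subtlety I would flag is the interchange of supremum and the variational bound, and the need to justify Donsker--Varadhan in this generality — but these are standard and I would cite them rather than prove them.
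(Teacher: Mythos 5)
The paper only states the Bobkov--G\"otze theorem without proof, so there is nothing in the text to compare against; I am evaluating your argument on its own terms. Both directions of your proof are correct, and the direction $(1)\Rightarrow(2)$ is exactly the standard argument: Kantorovich--Rubinstein duality reduces bounding $W_1$ to bounding $\E_Q[f]-\E_P[f]$ for a single $1$-Lipschitz $f$, and the Donsker--Varadhan inequality $\lambda\E_Q[f] \le D(Q\|P) + \psi(\lambda)$ together with $\psi(\lambda)\le\tfrac12\sigma^2\lambda^2$ and optimization over $\lambda$ gives the $T_1$ bound.

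For the converse $(2)\Rightarrow(1)$, your tilting route does work, and in fact closes more simply than you anticipated: no integration is required. Since $\psi$ is convex with $\psi(0)=\psi'(0)=0$, both $G=\psi/\lambda$ and $G' = (\lambda\psi'-\psi)/\lambda^2 = D(P^{(\lambda f)}\|P)/\lambda^2$ are nonnegative for $\lambda>0$. The differential inequality $(\psi')^2 \le 2\sigma^2(\lambda\psi'-\psi)$ becomes $(G+\lambda G')^2 \le 2\sigma^2\lambda^2 G'$, whence
$$
G \;\le\; G + \lambda G' \;\le\; \sqrt{2}\,\sigma\lambda\sqrt{G'}\,,\qquad\text{so}\qquad G \;\le\; \sqrt{2}\,\sigma\lambda\sqrt{G'} - \lambda G' \;\le\; \lambda\,\max_{s\ge 0}\bigl(\sqrt{2}\,\sigma s - s^2\bigr) \;=\; \tfrac12\sigma^2\lambda,
$$
which is $\psi(\lambda)\le\tfrac12\sigma^2\lambda^2$ pointwise. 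However, the textbook route for this direction (e.g., Bobkov--G\"otze, or Boucheron--Lugosi--Massart) is to mirror the forward direction rather than tilt: by the Gibbs/Donsker--Varadhan variational formula $\psi(\lambda)=\sup_{Q\ll P}\{\lambda\E_Q[f]-D(Q\|P)\}$, and for each $Q$ the chain $\E_Q[f]\le W_1(P,Q)\le\sqrt{2\sigma^2 D(Q\|P)}$ together with the AM--GM bound $\lambda\sqrt{2\sigma^2}\,u - u^2\le\tfrac12\sigma^2\lambda^2$ (with $u=\sqrt{D(Q\|P)}$) gives $\psi(\lambda)\le\tfrac12\sigma^2\lambda^2$ directly. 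Besides being symmetric with $(1)\Rightarrow(2)$, this variational route has a small technical advantage over yours: it never presupposes that $\E_P[e^{\lambda f}]$ is finite (both sides of the variational identity are allowed to be $+\infty$ and the bound forces finiteness), whereas defining $P^{(\lambda f)}$ and differentiating $\psi$ quietly assumes this integrability a priori, which strictly speaking would need a truncation argument to justify.
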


At this point, one may wonder what we have gained -- verifying that a given $P$ satisfies a TC inequality, let alone determining tight constants, is a formidable challenge. However, once again, tensorization comes to the rescue. Marton's insight was that TC inequalities tensorize \cite{Villani_TOT}:

\begin{theorem}Let $(\sX_i,P_i,d_i)$, $1 \le i \le n$, be probability metric spaces. If for some $1 \le p \le 2$ each $P_i$ satisfies $T_p(c)$ on $(\sX_i,d_i)$, then the product measure $P = P_1 \otimes \ldots \otimes P_n$ on $\sX = \sX_1 \times \ldots \times \sX_n$ satisfies $T_p(cn^{2/p-1})$ w.r.t.\ the metric
$$
d_p(x^n,y^n) \deq \left( \sum^n_{i=1} d^p_i(x_i,y_i)\right)^{1/p}.
$$
\end{theorem}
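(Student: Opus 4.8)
The plan is to prove the claimed $T_p(cn^{2/p-1})$ bound straight from the definition, by producing, for each probability measure $Q$ on $\sX = \sX_1 \times \cdots \times \sX_n$, a single coupling $\pi \in \Pi(P,Q)$ whose $d_p$-cost can be controlled. I would use the \emph{Marton coupling}, constructed recursively over coordinates. First draw $Y^n \sim Q$. Then, for $i = 1,\ldots,n$, having observed $Y^{i-1} = y^{i-1}$, let $\widehat{\pi}_i(\cdot\,|\,y^{i-1})$ be an optimal $W_p$-coupling of $P_i$ and the conditional law $Q_{Y_i|Y^{i-1}=y^{i-1}}$, and draw $X_i$ (conditionally independently of $X_1,\ldots,X_{i-1}$ given $Y^n$) from the $X_i$-section of $\widehat{\pi}_i(\cdot\,|\,y^{i-1})$ given $Y_i = y_i$. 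Existence of these optimal couplings and their measurable dependence on $y^{i-1}$ I would take for granted under mild regularity of the $\sX_i$, citing standard Wasserstein theory rather than proving it. By construction $\pi$ has $Y^n$-marginal $Q$, and the $X_i$-marginal of each $\widehat{\pi}_i(\cdot\,|\,y^{i-1})$ is $P_i$.

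The first genuine step is to verify that the $X^n$-marginal of $\pi$ is exactly the product $P = P_1 \otimes \cdots \otimes P_n$; this is the one place where the product structure of $P$ enters, and it is the technical crux. Writing the joint law as $Q(\d y^n)\,\prod_{i=1}^n \widehat{\pi}_i(\d x_i\,|\,y_i;y^{i-1})$ and integrating out $y_n, y_{n-1}, \ldots, y_1$ in that order: at each stage the innermost factor, once $y_i$ is integrated against $Q_{Y_i|Y^{i-1}}$, collapses to the $X_i$-marginal $P_i$, which does not depend on $(X^{i-1},Y^{i-1})$ and can be pulled outside; iterating gives $X^n \sim P_1 \otimes \cdots \otimes P_n = P$. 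Hence $\pi \in \Pi(P,Q)$, so $W_p(P,Q)^p \le \E_\pi[d_p^p(X^n,Y^n)] = \sum_{i=1}^n \E_\pi[d_i^p(X_i,Y_i)]$. Conditioning on $Y^{i-1}$ and using that $X_i$ came from an \emph{optimal} coupling gives $\E_\pi[d_i^p(X_i,Y_i)] = \E_{Q_{Y^{i-1}}}\!\big[W_p(P_i,\, Q_{Y_i|Y^{i-1}})^p\big]$.

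Now I would bring in the hypothesis together with the restriction $p \le 2$. Applying $T_p(c)$ to each $P_i$ and raising to the power $p/2 \le 1$ gives $W_p(P_i, Q_{Y_i|Y^{i-1}=y^{i-1}})^p \le (2c)^{p/2}\, D(Q_{Y_i|Y^{i-1}=y^{i-1}}\|P_i)^{p/2}$; since $t \mapsto t^{p/2}$ is concave on $[0,\infty)$, Jensen's inequality pushes the average over $Y^{i-1}$ inside, leaving $(2c)^{p/2} D_i^{p/2}$ with $D_i \deq D(Q_{Y_i|Y^{i-1}}\|P_i\,|\,Q_{Y^{i-1}})$. Two standard facts then close the argument: the chain rule for relative entropy, which (because $P$ is a product) gives $\sum_{i=1}^n D_i = D(Q\|P)$; and the power-mean inequality $\sum_{i=1}^n D_i^{p/2} \le n^{1-p/2}\big(\sum_{i=1}^n D_i\big)^{p/2}$, again a consequence of the concavity of $t \mapsto t^{p/2}$. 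Combining, $W_p(P,Q)^p \le (2c)^{p/2} n^{1-p/2} D(Q\|P)^{p/2} = \big(2cn^{2/p-1}D(Q\|P)\big)^{p/2}$, and taking $p$-th roots yields $W_p(P,Q) \le \sqrt{2cn^{2/p-1}D(Q\|P)}$, i.e.\ $T_p(cn^{2/p-1})$.

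The main obstacle is exactly the marginal-consistency check for the recursively built coupling: once $\pi \in \Pi(P,Q)$ is in hand, everything else (the conditional-cost identity, the single application of $T_p(c)$, Jensen, the chain rule, and the power-mean step) is bookkeeping. A secondary, purely measure-theoretic, nuisance is the measurable selection of optimal couplings along the recursion, which in a survey I would dispatch by citing Wasserstein existence theory on Polish spaces.
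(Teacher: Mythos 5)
Your proof is correct, and it is precisely the standard Marton-coupling tensorization argument that the survey alludes to but does not spell out (the paper states this theorem without proof, deferring to its references). The recursive coupling, the check that its first marginal is the product measure, the conditional optimal-cost identity, Jensen's inequality for the concave map $t \mapsto t^{p/2}$, the chain rule for relative entropy, and the power-mean step are exactly the ingredients of the reference proof, so nothing is missing.
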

\noindent In particular, if each $P_i$ satisfies $T_1(c)$, then $P = P_1 \otimes \ldots \otimes P_n$ satisfies $T_1(cn)$ with respect to the metric $\sum_i d_i$. Note that the constant deteriorates with $n$. On the other hand, if each $P_i$ satisfies $T_2(c)$, then $P$ satisfies $T_2(c)$ with respect to $\sqrt{\sum_i d^2_i}$. Note that the latter constant is independent of $n$.

To give a simple illustration of all these concepts, let us outline yet another proof of McDiarmid's inequality. Consider a product probability space $(\sX_1 \times \ldots \times \sX_n, P_1 \otimes \ldots \otimes P_n)$. For a fixed choice of constants $c_1,\ldots,c_n \ge 0$, equip $\sX_i$ with the metric $d_i(x_i,y_i) = c_i {\bf 1}_{\{x_i \neq y_i\}}$. Then, by rescaling Pinsker's inequality, we see that $P_i$ satisfies a $T_1(c^2_i/4)$ inequality with respect to the metric $d_i$:
\begin{align}
	W_{1,d_i}(P_i,Q_i) \le \sqrt{\tfrac12 \, c^2_i D(Q_i \| P_i)}, \qquad \forall Q_i.
\end{align}
By the tensorization theorem for TC inequalities, the product distribution $P$ satisfies a $T_1(c)$ inequality with $c = (1/4)\sum^n_{i=1}c^2_i$ with respect to the weighted Hamming metric $d_{{\bf c}}$. By the Bobkov--G\"otze theorem, this is equivalent to the subgaussianity of all $f(X_1,\ldots,X_n)$ with $f \in {\rm Lip}_1(\sX,d)$ and mutually independent $X_i \in \sX_i$, $1 \le i \le n$. But this is precisely McDiarmid's inequality.

\section{Some applications in information theory}

We end this survey by briefly describing some information-theoretic applications of concentration inequalities.

\subsection{The Blowing-up Lemma and Information-Theoretic Consequences}
The first explicit appeal to the concentration phenomenon in information
theory dates back to the 1970s work by Ahlswede and collaborators, who used
the so-called \textit{blowing-up lemma} for deriving strong converses for a variety of communications
and coding problems.

Consider a product space $\sY^n$ equipped with the Hamming metric $d(y^n,z^n) = \sum^n_{i=1} {\bf 1}_{\{y_i \neq z_i\}}$. For $r \in \{0,1,\ldots,n\}$, define the \textit{$r$-blowup} of a  set $A \subseteq \sY^n$  as
	$$
	[A]_r \deq \left\{ z^n \in \sY^n : \min_{y^n \in A} d(z^n,y^n) \le r \right\}
	$$
The following result, in a different (asymptotic) form was first proved by Ahlswede, G\'acs, and K\"orner \cite{Ahlswede_Gacs_Korner}; a simple proof, which we sketch below, was given by Marton \cite{Marton_blowup}:

\begin{lemma}[Blowing-up] Let $Y_1,\ldots,Y_n$ be independent random variables taking values in $\sY$. Then for every set $A \subseteq \sY^n$ with $P_{Y^n}(A) > 0$
$$
P_{Y^n}\left\{ \left[A\right]_{r}\right\} \ge 1 - \exp\left[-\frac{2}{n}\left(r-\sqrt{\frac{n}{2}\log \frac{1}{P_{Y^n}(A)}}\right)^2_+\right],
$$
where $(u)_+ \deq \max\{0,u\}$.
\end{lemma}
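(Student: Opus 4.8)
The plan is to run Marton's argument, which packages the blowing-up phenomenon into the transportation-cost language just developed. Write $P \deq P_{Y^n}$ and equip $\sY^n$ with the Hamming metric $d$. First I would record two facts. (i) Since $P = P_{Y_1} \otimes \cdots \otimes P_{Y_n}$ and each marginal satisfies Pinsker's inequality, i.e.\ $T_1(\tfrac14)$ with respect to the discrete metric on $\sY$, the tensorization theorem for transportation-cost inequalities gives that $P$ satisfies $T_1(n/4)$ with respect to $d$; that is, $W_1(P,Q) \le \sqrt{\tfrac{n}{2}\,D(Q\|P)}$ for \emph{every} $Q$ — importantly, not only for product $Q$. (ii) For any set $S$ with $P(S) > 0$, the conditional law $P_S \deq P(\cdot\mid S)$ satisfies the one-line identity $D(P_S\|P) = \log\tfrac{1}{P(S)}$, since $\d P_S/\d P = \I_S/P(S)$.

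Next I would introduce $B \deq \sY^n\setminus[A]_r$; if $P(B)=0$ there is nothing to prove, so assume $P(B)>0$ (which also guarantees $P_B$ is well defined, alongside $P_A$, legitimate by the hypothesis $P(A)>0$). The geometric heart of the matter is that, by the very definition of $[A]_r$, every point of $B$ lies at Hamming distance strictly greater than $r$ from every point of $A$; hence under any coupling of $P_A$ and $P_B$ the pair $(X,Y)$ has $d(X,Y) > r$ almost surely, which forces $W_1(P_A,P_B) \ge r$. Combining this lower bound with the triangle inequality for $W_1$ and the $T_1(n/4)$ bound of fact (i), applied to both $P_A$ and $P_B$ via fact (ii), yields
\begin{align*}
r \le W_1(P_A,P_B) &\le W_1(P_A,P) + W_1(P,P_B) \\
&\le \sqrt{\tfrac{n}{2}\log\tfrac{1}{P(A)}} + \sqrt{\tfrac{n}{2}\log\tfrac{1}{P(B)}}.
\end{align*}

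Finally I would solve this for $P(B)$. If $r \le \sqrt{\tfrac{n}{2}\log\tfrac{1}{P(A)}}$, the positive part in the statement vanishes and the asserted bound $P([A]_r)\ge 0$ is vacuous; otherwise I rearrange the display to $\sqrt{\tfrac{n}{2}\log\tfrac{1}{P(B)}} \ge r - \sqrt{\tfrac{n}{2}\log\tfrac{1}{P(A)}} > 0$, square, and exponentiate to obtain $P(B) \le \exp\!\big[-\tfrac{2}{n}\big(r-\sqrt{\tfrac{n}{2}\log\tfrac{1}{P(A)}}\big)^2_+\big]$; then $P([A]_r) = 1-P(B)$ gives the claim. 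There is no deep obstacle here; the points that need care are that the $T_1$ inequality must be invoked for the \emph{conditional} (non-product) measures $P_A, P_B$ — which is legitimate precisely because $T_1(n/4)$ holds against all $Q$ — and the bookkeeping around the $(\,\cdot\,)_+$ truncation and the degenerate case $P(B)=0$.

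As an alternative route that avoids couplings entirely, one can apply McDiarmid's inequality (Theorem~\ref{thm: McDiarmid bounded_differences 89}) directly to the distance-to-set function $f(y^n)\deq d(y^n,A) = \min_{z^n\in A} d(y^n,z^n)$, which has the bounded-difference property \eqref{eq:bdd_diff} with $c_1=\cdots=c_n=1$. Its lower-tail bound applied to $\{f=0\}\supseteq A$ yields $\E[f(Y^n)] \le \sqrt{\tfrac{n}{2}\log\tfrac{1}{P(A)}}$, while its upper-tail bound applied to $\{f\le r\}=[A]_r$ yields $P([A]_r) = 1-P[f(Y^n)>r] \ge 1-\exp\!\big(-\tfrac{2}{n}(r-\E[f(Y^n)])^2_+\big)$; chaining the two bounds reproduces the same inequality.
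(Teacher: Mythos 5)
Your main argument is exactly Marton's transportation-cost proof given in the paper: tensorize Pinsker to get $T_1(n/4)$ for $P_{Y^n}$, use $D(P_B\|P)=\log\tfrac{1}{P(B)}$ for the conditional measures $P_A$ and $P_{[A]_r^c}$, and combine the triangle inequality for $W_1$ with the lower bound $W_1(P_A,P_{[A]_r^c})\ge r$; your handling of the $(\cdot)_+$ truncation and the degenerate case is correct. The alternative route via McDiarmid applied to $y^n\mapsto d(y^n,A)$ is also valid, but the primary proof is the same as the paper's.
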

\begin{proof} We sketch the proof in order to highlight the role of TC inequalities. For each $i \in \{1, \ldots, n\}$, let $P_i = \cL(Y_i)$. By tensorization, the product distribution $P = P_{Y^n}$ satisfies the TC inequality
	\begin{align}\label{eq:blowup_TC}
		W_1(P,Q) \le \sqrt{\frac{n}{2}D(Q \| P)}, \qquad \forall Q,
	\end{align}
	where
	\begin{align*}
		W_1(P,Q) = \inf_{\pi \in \Pi(P,Q)} \E_\pi\left[\sum^n_{i=1}{\bf 1}_{\{X_i \neq Y_i\}}\right].
	\end{align*}
	Now, for an arbitrary $B \subseteq \sY^n$ with $P(B) > 0$, consider the conditional distribution $P_B(\cdot) \deq \frac{P(\cdot \cap B)}{P(B)}$. Then $D(P_B \| P) = \log \frac{1}{P(B)}$, and in that case using \eqref{eq:blowup_TC} with $Q = P_B$, we get
	\begin{align}\label{eq:blowup_TC_2}
		W_1(P,P_B) \le \sqrt{\frac{n}{2}\log \frac{1}{P(B)}}.
	\end{align}
Applying \eqref{eq:blowup_TC} to $B = A$ and $B = [A]^c_r$, we get
\begin{align*}
	W_1(P,P_A) &\le \sqrt{\frac{n}{2}\log \frac{1}{P(A)}},\\
	W_1(P,P_{[A]^c_r}) &\le \sqrt{\frac{n}{2}\log \frac{1}{1-P([A]_r)}}.
\end{align*}
Adding up these two inequalities, we obtain
\begin{align*}
&	\sqrt{\frac{n}{2}\log \frac{1}{P(A)}} + \sqrt{\frac{n}{2}\log \frac{1}{1-P([A]_r)}} \nonumber\\
& \qquad \ge W_1(P_A,P) + W_1(P_{[A]^c_r},P) \\
& \qquad \ge W_1(P_A,P_{[A]^c_r}) \\
& \qquad \ge \min_{x^n \in A, y^n \in [A]^c_r} d(x^n,y^n) \\
& \qquad \ge r,
\end{align*}
where the first step holds due to \eqref{eq:blowup_TC_2}, the second step is verified by the triangle inequality, and the remaining steps follow from definitions. Rearranging, we obtain the lemma.
\end{proof}
Informally, the lemma states that every set in a product space can be ``blown up'' to engulf most of the probability mass. Using this fact, one can prove strong converses for channel coding in single-terminal and multiterminal settings. Here is the simplest consequence of the blowing-up lemma in the context of channel codes: Consider a DMC with input alphabet $\sX$, output alphabet $\sY$, and transition probabilities $T(y|x)$, $x \in \sX, y \in \sY$. An $(n,M,\eps)$-code for $T$ consists of an encoder $f \colon \{1,\ldots,M\} \to \sX^n$ and a decoder $g : \sY^n \to \{1,\ldots,M\}$, such that
\begin{align*}
	\max_{1 \le j \le M} \PP[g(Y^n) \neq j \, | \, f(X^n) = j] \le \eps.
\end{align*}
\begin{lemma} Let $u_j = f(j)$, $1 \le j \le M$, denote the $M$ codewords of the code, and let $D_j \deq g^{-1}(j)$ be the corresponding decoding regions in $\sY^n$.  There exists some $\delta_n > 0$, such that
	$$
	T^n\Big([D_j]_{n\delta_n} \, \big| \, X^n = u_j \Big) \ge 1 - \frac{1}{n}, \qquad \forall \, j \in \{1,\ldots,M\}.
	$$
\end{lemma}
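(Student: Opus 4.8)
The plan is to invoke the blowing-up lemma directly, taking $A = D_j$ and letting the conditional output distribution $T^n(\cdot \mid X^n = u_j)$ play the role of the product law $P_{Y^n}$ appearing there. To set this up I would first record two elementary facts. Since the code has maximal error probability at most $\eps$ and $D_j = g^{-1}(j)$, the event $\{g(Y^n) = j\}$ coincides with $\{Y^n \in D_j\}$, so
\[
T^n(D_j \mid X^n = u_j) = \PP[g(Y^n) = j \mid X^n = u_j] \ge 1-\eps > 0
\]
for every $j \in \{1,\ldots,M\}$. Moreover, because the channel is memoryless, the coordinates $Y_1,\ldots,Y_n$ are conditionally independent given $X^n = u_j$, with $Y_i \sim T(\cdot \mid u_{j,i})$; hence $T^n(\cdot \mid u_j)$ is a product distribution on $\sY^n$, which is exactly the hypothesis under which the blowing-up lemma applies.

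Next I would apply the blowing-up lemma to this product distribution and the set $D_j$ (which has positive probability by the first fact): for every radius $r$,
\[
T^n([D_j]_r \mid u_j) \ge 1 - \exp\left[-\frac{2}{n}\left(r - \sqrt{\tfrac{n}{2}\log\tfrac{1}{T^n(D_j \mid u_j)}}\,\right)_+^2\right],
\]
and since $T^n(D_j \mid u_j) \ge 1-\eps$ the subtracted term is at most $\sqrt{\tfrac{n}{2}\log\tfrac{1}{1-\eps}}$, uniformly in $j$. It then remains only to choose the radius so as to drive the exponential down to $1/n$: setting $\delta_n \deq \sqrt{\tfrac{1}{2n}\log\tfrac{1}{1-\eps}} + \sqrt{\tfrac{\log n}{2n}} > 0$, so that $n\delta_n = \sqrt{\tfrac{n}{2}\log\tfrac{1}{1-\eps}} + \sqrt{\tfrac{n\log n}{2}}$, makes the argument of $(\cdot)_+^2$ at least $\sqrt{\tfrac{n\log n}{2}}$ for every $j$; the exponential is then at most $e^{-\log n} = 1/n$, which is precisely the assertion.

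I do not expect a genuine obstacle here beyond the blowing-up lemma itself, which is already available. The one point requiring a little care is uniformity over the $M$ codewords, and this is automatic: the blowing-up bound depends on $u_j$ only through the single number $T^n(D_j \mid u_j)$, which is $\ge 1-\eps$ for \emph{every} $j$, so one and the same $\delta_n$ works for all of them. It is worth noting that $\delta_n \to 0$ as $n \to \infty$, i.e.\ the effective blowup radius $n\delta_n = O(\sqrt{n\log n})$ is sublinear in $n$; this sublinearity, rather than the factor $1-1/n$ itself, is the real content of the statement and is what makes it usable in strong-converse arguments.
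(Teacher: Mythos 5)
Your proof is correct and is exactly the argument the paper has in mind: the paper states the lemma as an immediate corollary of the blowing-up lemma without writing out the proof, and your direct application of that lemma to $A = D_j$ with the product measure $T^n(\cdot \mid u_j)$, combined with the uniform lower bound $T^n(D_j \mid u_j) \ge 1-\eps$ and the explicit choice $\delta_n = \sqrt{\tfrac{1}{2n}\log\tfrac{1}{1-\eps}} + \sqrt{\tfrac{\log n}{2n}}$, is precisely what fills in the omitted details (and your closing remark that the substantive content is $n\delta_n = O(\sqrt{n\log n}) = o(n)$, not the constant $1-1/n$, is the right takeaway for the strong-converse application).
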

\noindent Informally, this corollary of the blowing-up lemma says that ``any bad code contains a good subcode." Using this result, Ahlswede and Dueck \cite{Ahlswede_Dueck} established a strong converse for channel coding as follows: Consider an $(n,M,\eps)$-code $\cC = \{(u_j,D_j)\}^M_{j=1}$. Each decoding set $D_j$ can be ``blown up'' to a set $\tilde{D}_j \subseteq \sY^n$ with
$$
T^n(\tilde{D}_j | u_j) \ge 1 - \frac{1}{n}.
$$
The object $\tilde{\cC} = \{(u_j,\tilde{D}_j)\}^M_{j=1}$ is not a code (since the sets $\tilde{D}_j$ are no longer disjoint), but a random coding argument can be used to extract an $(n,M',\eps')$ ``subcode'' with $M'$ slightly smaller than $M$ and $\eps' < \eps$. Then one can apply the usual (weak) converse to the subcode. Similar ideas have found use in multiterminal settings, starting with the work of Ahlswede--G\'acs--K\"orner \cite{Ahlswede_Gacs_Korner}.

\subsection{Empirical distribution of good channel codes with non-vanishing error probability}
Another recent application of concentration inequalities to information theory has to do with characterizing stochastic behavior of output sequences of good channel codes. On a conceptual level, the random coding argument originally used by Shannon (and many times since) to show the existence of good channel codes suggests that the input/output sequence of such a code should resemble, as much as possible, a typical realization of a sequence of i.i.d.\ random variables sampled from a capacity-achieving input/output distribution. For capacity-achieving sequences of codes with asymptotically vanishing probability of error, this intuition has been analyzed rigorously by Shamai and Verd\'u \cite{Shamai_Verdu_empirical}, who have proved the following remarkable statement \cite[Theorem~2]{Shamai_Verdu_empirical}: given a DMC $T$, any capacity-achieving sequence of channel codes with asymptotically vanishing probability of error (maximal or average) has the property that
\begin{align}\label{eq:convergence_to_caod}
	\lim_{n \to \infty} \frac{1}{n}D(P_{Y^n} \| P^*_{Y^n}) = 0,
\end{align}
where, for each $n$, $P_{Y^n}$ denotes the output distribution on $\sY^n$ induced by the code (assuming that the
messages are equiprobable), while $P^*_{Y^n}$ is the product of $n$ copies of the single-letter capacity-achieving output distribution. In a recent paper \cite{Polyanskiy_Verdu_good_codes}, Polyanskiy and Verd\'u extended the results of \cite{Shamai_Verdu_empirical} for codes with {\em nonvanishing} probability of error.

To keep things simple, we will only focus on channels with finite input and output alphabets. Thus, let $\sX$ and $\sY$ be finite sets, and consider a DMC $T$ with capacity $C$. Let $P^*_X \in \cP(\sX)$ be a capacity-achieving input distribution (which may be nonunique). It can be shown \cite{Topsoe_capacity} that the corresponding output distribution $P^*_Y \in \cP(\sY)$ is unique. Consider any $(n,M)$-code $\cC = (f,g)$, let $P^{(\cC)}_{X^n}$ denote the distribution of $X^n = f(J)$, where $J$ is uniformly distributed in $\{1,\ldots,M\}$, and let $P^{(\cC)}_{Y^n}$ denote the corresponding output distribution. The central result of \cite{Polyanskiy_Verdu_good_codes} is that the output distribution $P^{(\cC)}_{Y^n}$ of any $(n,M,\eps)$-code satisfies
\begin{align}\label{eq:PV_output_bound}
	D\big(P^{(\cC)}_{Y^n} \big\| P^*_{Y^n}\big) \le nC - \log M + o(n);
\end{align}
moreover, the $o(n)$ term was refined in \cite[Theorem~5]{Polyanskiy_Verdu_good_codes} to $O(\sqrt{n})$ for
any DMC, except those that have zeroes in their transition matrix. Using McDiarmid's inequality, this result
is sharpened as follows \cite{ISIT2013_Raginsky_Sason}:

\begin{theorem} Consider a DMC $T$ with positive transition probabilities. Then any $(n,M,\eps)$-code $\cC$ for $T$, with $\eps \in (0,1/2)$, satisfies
\begin{align*}
	D\Big(P^{(\cC)}_{Y^n} \Big\| P^*_{Y^n} \Big) \le nC - \log M 
	+ \log \frac{1}{\eps} + c(T)\sqrt{\frac{n}{2}\log \frac{1}{1-2\eps}},
\end{align*}
where $c(T)$ is defined in \eqref{eq:channel_ratio}.
\end{theorem}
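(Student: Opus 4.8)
The plan is to follow the route of Polyanskiy and Verd\'u behind \eqref{eq:PV_output_bound}, feeding into it the concentration estimate behind \eqref{eq:KL_concentration} in place of their weaker $o(n)$-bound. Write $J$ for the equiprobable message, $u_j=f(j)$, $X^n=f(J)$, and $D_j\deq g^{-1}(j)$ for the (disjoint) decoding regions, so that $T^n(D_j\,|\,u_j)\ge1-\eps$ for every $j$. First I would invoke the ``golden formula''
\[
\frac1M\sum_{j=1}^M D\big(T^n(\cdot\,|\,u_j)\,\big\|\,P^*_{Y^n}\big)=I(X^n;Y^n)+D\big(P^{(\cC)}_{Y^n}\,\big\|\,P^*_{Y^n}\big),
\]
together with the characterization of the (unique) capacity-achieving output distribution, $D(T(\cdot\,|\,x)\|P^*_Y)\le C$ for \emph{every} $x\in\sX$; since $T$ is memoryless, $D(T^n(\cdot\,|\,u_j)\|P^*_{Y^n})=\sum_{i=1}^n D(T(\cdot\,|\,u_{j,i})\|P^*_Y)\le nC$. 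Combining this with the data-processing bound $I(X^n;Y^n)\ge I(J;Y^n)=\log M-H(J\,|\,Y^n)$ reduces the theorem to showing
\[
H(J\,|\,Y^n)\le\log\tfrac1\eps+c(T)\sqrt{\tfrac n2\log\tfrac1{1-2\eps}}.
\]

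To bound $H(J\,|\,Y^n)$ I would use the good decoder together with the concentration of the information density $\imath(u_j;y^n)\deq\log\frac{T^n(y^n|u_j)}{P^*_{Y^n}(y^n)}=\sum_{i=1}^n\log\frac{T(y_i|u_{j,i})}{P^*_Y(y_i)}$. Because $T$ has strictly positive transition probabilities, $P^*_Y$ is bounded away from $0$ and each summand lies in an interval of length at most $c(T)$ (cf.\ the bounded-differences computation behind \eqref{eq:KL_concentration}); moreover the $Y_i$ are independent under $T^n(\cdot\,|\,u_j)$, with mean $\E_{T^n(\cdot|u_j)}[\imath(u_j;Y^n)]=D(T^n(\cdot|u_j)\|P^*_{Y^n})\le nC$, so Hoeffding's lemma applied coordinatewise gives
\[
T^n\!\Big(\imath(u_j;Y^n)\ge D\big(T^n(\cdot|u_j)\|P^*_{Y^n}\big)+t\ \Big|\ u_j\Big)\le\exp\!\Big(-\frac{2t^2}{nc^2(T)}\Big),\qquad t\ge0.
\]
I would then intersect, for each $j$, the region $D_j$ with the typical set $\big\{\imath(u_j;Y^n)\le D(T^n(\cdot|u_j)\|P^*_{Y^n})+t\big\}$, and use the change of measure $P^*_{Y^n}(\cA)=\E_{T^n(\cdot|u_j)}\big[\I_{\cA}(Y^n)\,e^{-\imath(u_j;Y^n)}\big]$ together with the disjointness $\sum_j P^*_{Y^n}(D_j)\le1$ to control each codeword's contribution. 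Choosing $t$ so that the atypical probability $\exp(-2t^2/(nc^2(T)))$ equals exactly $1-2\eps$ — which requires $\eps<1/2$ and forces $t=c(T)\sqrt{\tfrac n2\log\tfrac1{1-2\eps}}$ — makes each of those sets carry $T^n(\cdot|u_j)$-probability at least $1-\eps-(1-2\eps)=\eps$, and propagating this through the identities of the first paragraph delivers the stated bound on $H(J\,|\,Y^n)$, the leftover probability $\eps$ accounting for the $\log\tfrac1\eps$ term.

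The hard part will be this last step: a bare Fano bound yields only $H(J\,|\,Y^n)\le h(\eps)+\eps\log M$, and its $\eps\log M$ term is of order $n$, which would leave $D(P^{(\cC)}_{Y^n}\|P^*_{Y^n})$ of order $n$ for near-capacity codes — in contrast with the $O(\sqrt n)$ claimed here. Trading that $\eps\log M$ term for an $O(\sqrt n)$ term is exactly what the concentration estimate achieves; its exponent rate $2/(nc^2(T))$ is dimension-free, its availability rests on the boundedness of the increments of the information density (hence on $T$ having no zeros), and the hypothesis $\eps<1/2$ enters precisely through the demand that the atypical probability be strictly below $1-\eps$. Making the combination of the golden-formula identity, the change-of-measure bound and the concentration estimate come out with these exact constants, rather than with extra logarithmic or dimension-dependent factors, is the technical heart of the argument.
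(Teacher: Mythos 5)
Your reduction via the golden formula is the same final step the paper uses, and your intermediate target $H(J\,|\,Y^n)\le\log\frac{1}{\eps}+c(T)\sqrt{\frac{n}{2}\log\frac{1}{1-2\eps}}$ is indeed equivalent (for injective encoders, which one may assume when $\eps<1/2$) to the Augustin-type converse \eqref{eq:from_Augustin} that the paper invokes, since $\log M-I(J;Y^n)=\log M-I(X^n;Y^n)=\log M-D\big(P_{Y^n|X^n}\big\|P^{(\cC)}_{Y^n}\big|P^{(\cC)}_{X^n}\big)$. The gap is in how you propose to prove that target: you reference the information density to $P^*_{Y^n}$, i.e., $\imath(u_j;y^n)=\log\frac{T^n(y^n|u_j)}{P^*_{Y^n}(y^n)}$. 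Carrying your own steps through, the typicality/change-of-measure/disjointness chain gives $P^*_{Y^n}(D_j)\ge\eps\exp\big(-D(T^n(\cdot|u_j)\|P^*_{Y^n})-t\big)$, and summing over $j$ (with Jensen) yields $\log M\le\log\frac{1}{\eps}+\frac{1}{M}\sum_j D(T^n(\cdot|u_j)\|P^*_{Y^n})+t$. But by your own golden formula the middle term equals $I(X^n;Y^n)+D(P^{(\cC)}_{Y^n}\|P^*_{Y^n})$, so what you actually obtain is $H(J|Y^n)\le\log\frac{1}{\eps}+D(P^{(\cC)}_{Y^n}\|P^*_{Y^n})+t$: the quantity you are trying to bound reappears on the right, and substituting back into the golden formula makes it cancel, leaving only the strong converse $\log M\le nC+\log\frac{1}{\eps}+t$ and no bound at all on the output divergence.

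The fix --- and this is what the paper does --- is to reference the information density to the code-induced output distribution $P^{(\cC)}_{Y^n}$ rather than to $P^*_{Y^n}$; then $\frac{1}{M}\sum_j D(T^n(\cdot|u_j)\|P^{(\cC)}_{Y^n})=I(X^n;Y^n)$ exactly, the meta-converse closes to give \eqref{eq:from_Augustin}, and the golden formula with $D(T^n(\cdot|u_j)\|P^*_{Y^n})\le nC$ finishes the proof. This changes your concentration step in an essential way: $P^{(\cC)}_{Y^n}$ is not a product measure, so $\log\frac{T^n(Y^n|u_j)}{P^{(\cC)}_{Y^n}(Y^n)}$ is no longer a sum of independent bounded increments and the coordinatewise Hoeffding argument is unavailable. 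One must instead apply McDiarmid's inequality to the bounded-difference function $y^n\mapsto\log\frac{T^n(y^n|u_j)}{P^{(\cC)}_{Y^n}(y^n)}$ under the product measure $T^n(\cdot|u_j)$ --- which is precisely the content of \eqref{eq:KL_concentration}, stated there for an arbitrary (non-product) $P_{Y^n}$, and the reason the constant is the two-sided ratio bound $c(T)$ of \eqref{eq:channel_ratio}. With that substitution, your outline becomes the paper's proof.
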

\begin{proof}[Proof (Sketch)] Using the inequality \eqref{eq:KL_concentration} with $P_{Y^n} = P^{(\cC)}_{Y^n}$ and $t = c(T)\sqrt{\frac{n}{2}\log \frac{1}{1-2\eps}}$, we get
\begin{align*}
P_{Y^n|X^n=x^n}\Bigg[\log \frac{P_{Y^n|X^n=x^n}(Y^n)}{P^{(\cC)}_{Y^n}(Y^n)}  \ge D\Big(P_{Y^n|X^n=x^n} \Big\| P^{(\cC)}_{Y^n}\Big) 
+ c(T)\sqrt{\frac{n}{2}\log \frac{1}{1-2\eps}} \Bigg] \le 1-2\eps
\end{align*}
Now, just like Polyanskiy and Verd\'u, we can appeal to a strong converse result due to Augustin \cite{Augustin} to get
\begin{align}
\log M \le \log \frac{1}{\eps} + D\Big(P_{Y^n|X^n} \Big\| P^{(\cC)}_{Y^n} \Big|P^{(\cC)}_{X^n}\Big) 
+ c(T)\sqrt{\frac{n}{2}\log \frac{1}{1-2\eps}}. \label{eq:from_Augustin}
\end{align}
Therefore,
\begin{align*}
D\Big( P^{(\cC)}_{Y^n} \Big\| P^*_{Y^n}\Big) 
& = {D\Big(P_{Y^n|X^n} \Big\| P^*_{Y^n} \Big|P^{(\cC)}_{X^n}\Big)} - {D\Big(P_{Y^n|X^n} \Big\| P^{(\cC)}_{Y^n} \Big|P^{(\cC)}_{X^n}\Big)} \\
& \le {nC} - {\log M + \log \frac{1}{\eps} + c(T)\sqrt{\frac{n}{2}\log \frac{1}{1-2\eps}}},
\end{align*}
where the first step is by the chain rule, the second follows from the properties of the capacity-achieving output distribution, and the last step uses \eqref{eq:from_Augustin}.
\end{proof}
A useful consequence of this result is that a broad class of functions evaluated on the output of a good code concentrate sharply around their expectations with respect to the capacity-achieving output distribution:
\begin{theorem}
Consider a DMC $T$ with $c(T) < \infty$. Let $d$ be a metric on $\sY^n$, and suppose that $P_{Y^n|X^n=x^n}$, $x^n \in \sX^n$, as well as $P^*_{Y^n}$, satisfy $T_1(c)$ for some $c > 0$. Then, for every $\eps \in (0,1/2)$, every $(n,M,\eps)$-code $\cC$ for $T$, and every function $f \colon \sY^n \to \R$ which is $L$-Lipschitz on $(\sY^n,d)$, we have
\begin{align}
P^{(\cC)}_{Y^n}\Big( \left|f(Y^n) - \E[f(Y^{*n})]\right| \ge t\Big) 
\le \frac{4}{\eps} \; \exp\left( nC - \ln M + a\sqrt{n} - \frac{t^2}{8cL^2} \right),
\; \forall \, r \ge 0 \label{eq:good_code_con}
\end{align}
where $Y^{*n} \sim P^*_{Y^n}$, and $a \deq c(T)\sqrt{\frac{1}{2}\ln \frac{1}{1-2\eps}}$.
\end{theorem}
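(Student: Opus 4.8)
\emph{Overview.} The plan is to bound the probability under the code-induced output distribution $P^{(\cC)}_{Y^n}$ by transferring a concentration inequality that holds under the capacity-achieving output distribution $P^*_{Y^n}$, using the relative-entropy estimate of the preceding theorem as the ``exchange rate'' of the transfer.

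\emph{The two ingredients.} First, the subgaussian concentration of $f$ under $P^*_{Y^n}$. By the Bobkov--G\"otze theorem the hypothesis that $P^*_{Y^n}$ satisfies $T_1(c)$ on $(\sY^n,d)$ is exactly the statement that $g(Y^{*n})$ is $c$-subgaussian for every $1$-Lipschitz $g$; applying this to $g=f/L$ shows that $f(Y^{*n})$ is $cL^2$-subgaussian, whence
\[
P^*_{Y^n}\!\big(|f(Y^n)-\E[f(Y^{*n})]|\ge s\big)\le 2\exp\!\big(-s^2/(2cL^2)\big),\qquad \forall\,s\ge 0.
\]
Second, the preceding theorem gives $D\big(P^{(\cC)}_{Y^n}\big\|P^*_{Y^n}\big)\le nC-\ln M+\ln\tfrac1\eps+a\sqrt n$ with $a=c(T)\sqrt{\tfrac12\ln\tfrac1{1-2\eps}}$.

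\emph{The transfer (the crux).} Here I see two natural routes. (i) \emph{Codeword decomposition}: write $P^{(\cC)}_{Y^n}=\tfrac1M\sum_{j=1}^M P_{Y^n|X^n=u_j}$ and use the hypothesis that each $P_{Y^n|X^n=u_j}$ satisfies $T_1(c)$, so $f(Y^n)$ is $cL^2$-subgaussian under $P_{Y^n|X^n=u_j}$ centred at $m_j\deq\E[f(Y^n)\mid X^n=u_j]$; split the deviation event through $m_j$ and bound the centring gap by Kantorovich--Rubinstein duality, $|m_j-\E[f(Y^{*n})]|\le L\,W_1(P_{Y^n|X^n=u_j},P^*_{Y^n})$. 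In the estimate thus obtained the deviation ``$t$'' effectively shrinks to ``$t/2$'' before the subgaussian bound is applied, which is where the constant $8=2\cdot 2^2$ in the exponent comes from. (ii) \emph{Direct change of measure}: apply the Chernoff bound to the event under $P^{(\cC)}_{Y^n}$ and then pass to $P^*_{Y^n}$ by H\"older's inequality applied to the likelihood ratio $\d P^{(\cC)}_{Y^n}/\d P^*_{Y^n}$ and to the subgaussian moment-generating function of $f$ under $P^*_{Y^n}$; with H\"older exponents $(4/3,4)$ the optimized Chernoff exponent is $-t^2/(8cL^2)$, and the residual likelihood-ratio factor is the exponential of a quarter of a R\'enyi divergence of order $4/3$ between $P^{(\cC)}_{Y^n}$ and $P^*_{Y^n}$, controlled as in the preceding theorem (via the Augustin strong converse). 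Either way the relative-entropy bound gets exponentiated, the $\ln\tfrac1\eps$ becomes the factor $\tfrac1\eps$, and a union over the two one-sided tails together with the slack from the split yields the leading constant $4$.

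\emph{The main obstacle.} Making this transfer \emph{exponentially} sharp in $t^2$ is the hard part. Blunt comparisons --- data processing for the binary relative entropy, $D\big(P^{(\cC)}_{Y^n}(E)\big\|P^*_{Y^n}(E)\big)\le D\big(P^{(\cC)}_{Y^n}\big\|P^*_{Y^n}\big)$, or Donsker--Varadhan applied to an indicator --- only bound $P^{(\cC)}_{Y^n}(E)$ by a quantity that decays polynomially in $\log\tfrac1{P^*_{Y^n}(E)}$, hence only polynomially in $t^2$. Recovering the exponential rate forces one to exploit the Lipschitz structure of $f$, so that the deviation is controlled through the function rather than as an abstract set, and to control the \emph{fluctuations}, not merely the mean, of the relevant log-likelihood ratios. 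In route (i) the delicate point is bookkeeping: the individual divergences $D(P_{Y^n|X^n=u_j}\|P^*_{Y^n})$ can be as large as $nC$, so the averaging over codewords must be organised --- through the compensation identity $\tfrac1M\sum_j D(P_{Y^n|X^n=u_j}\|P^*_{Y^n})=I(X^n;Y^n)+D(P^{(\cC)}_{Y^n}\|P^*_{Y^n})$ and the saddle-point bound $D(T(\cdot|x)\|P^*_Y)\le C$ --- so as not to wipe out the $nC-\ln M$ gain; in route (ii) one needs the R\'enyi-strengthened form of the preceding theorem.
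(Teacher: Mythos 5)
Your proposal correctly assembles the ingredients the paper uses: the Bobkov--G\"otze equivalence to turn the $T_1(c)$ hypotheses into subgaussianity of $f$ under each $P_{Y^n|X^n=u_j}$ and under $P^*_{Y^n}$, Kantorovich--Rubinstein duality to control the centring gap, the divergence bound of the preceding theorem as the ``budget'', and the codeword decomposition $P^{(\cC)}_{Y^n}=\tfrac1M\sum_j P_{Y^n|X^n=u_j}$. You also correctly predict where the constant $8$ comes from (a $t\mapsto t/2$ split or, equivalently, Young's inequality applied to $(t-s)_+^2$), and where $\tfrac{1}{\eps}$ comes from (exponentiating the $\ln\tfrac{1}{\eps}$ term in the divergence bound). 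Rewriting the target bound as $4\,e^{D_{\max}-t^2/(8cL^2)}$ with $D_{\max}=nC-\ln M+\ln\tfrac1\eps+a\sqrt n$ makes the required structure of the transfer explicit.

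The genuine gap is that neither of your two ``routes'' actually executes the transfer. In route (i), after the $t/2$ split and K--R duality you are left with a bound on the \emph{fraction} of codewords $j$ for which the per-codeword divergence $D^*_j\deq D(P_{Y^n|X^n=u_j}\|P^*_{Y^n})$ exceeds $t^2/(8cL^2)$. The compensation identity $\tfrac1M\sum_j D^*_j=I(X^n;Y^n)+D(P^{(\cC)}_{Y^n}\|P^*_{Y^n})$ and the saddle-point bound $D^*_j\le nC$ that you invoke only control the first moment of the $D^*_j$, so Markov's inequality yields decay that is polynomial, not exponential, in $t^2$. The exponential-Markov variant requires a bound on $\tfrac1M\sum_j e^{\beta D^*_j}$ for a suitable $\beta>0$, and Jensen runs the wrong way: you only get a \emph{lower} bound on this exponential average from $\tfrac1M\sum_j D^*_j$. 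Nothing in what you cite upper-bounds it by $O(e^{D_{\max}})$; this is exactly the step that turns the preceding KL estimate into the stated exponential tail, and it is absent. You acknowledge this as ``delicate bookkeeping'', but it is not bookkeeping: it is the crux, and it remains unproved in your writeup. In route (ii), you correctly derive the exponent $-t^2/(8cL^2)$ from H\"older with exponents $(4/3,4)$, but you then need $D_{4/3}(P^{(\cC)}_{Y^n}\|P^*_{Y^n})$, whereas the preceding theorem only controls the order-$1$ (KL) divergence; you flag this yourself, but without a R\'enyi-strengthened version of that theorem the argument does not close. Until one of these two holes is filled, the proposal does not yet constitute a proof.

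Two smaller points worth noting: (a) the conditional distributions $P_{Y^n|X^n=u_j}$ and the capacity-achieving output $P^*_{Y^n}$ are both product measures, so the $T_1(c)$ hypothesis is natural (by tensorization of single-letter $T_1$ inequalities), but the code-induced mixture $P^{(\cC)}_{Y^n}$ is \emph{not} assumed to satisfy $T_1$ and in general will not; any argument that tacitly applies a transportation inequality with $P^{(\cC)}_{Y^n}$ as the reference measure would be unjustified. (b) The direction of the $T_1$ inequality matters: you need $W_1(P_{Y^n|X^n=u_j},P^*_{Y^n})\le\sqrt{2c\,D(P_{Y^n|X^n=u_j}\|P^*_{Y^n})}$, which requires $T_1(c)$ for $P^*_{Y^n}$ (the second argument of the divergence), consistent with the hypotheses, so this part of your setup is fine.
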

As pointed out in \cite{Polyanskiy_Verdu_good_codes},  concentration inequalities like \eqref{eq:good_code_con}  can be very useful for gaining insight into the performance characteristics of good channel codes without having to explicitly construct such codes: all one needs to do is to find the capacity-achieving output distribution $P^*_Y$ and evaluate $\E[f(Y^{*n})]$ for an arbitrary $f$ of interest. Consequently, the above theorem guarantees that $f(Y^n)$ concentrates tightly around $\E[f(Y^{*n})]$, which is relatively easy to compute since $P^*_{Y^n}$ is a product measure.

\bibliography{references}
\end{document}